\DeclareRobustCommand\onedot{\futurelet\@let@token\@onedot}
\def\@onedot{\ifx\@let@token.\else.\null\fi\xspace}
\def\ie{\emph{i.e}\onedot}
\newcommand{\Gstruct}{ $G_{l_1 + l_2}$\xspace}
\newcommand{\Mstruct}{ Q^{\textsf{S}}}
\newcommand{\Grandom}{ $G_{l_1 + N}$\xspace}
\newcommand{\Mrandom}{ Q^{\textsf{R}}}
\begin{document}

\title{Structure Amplification on Multi-layer Stochastic Block Models}

\author{\name Xiaodong Xin\thanks{The first three authors contribute equally.} \email M201973348@hust.edu.cn 
\AND
\name Kun He~\thanks{Corresponding author.} \email brooklet60@hust.edu.cn\\ 
\addr School of Computer Science and Technology\\
Huazhong University of Science and Technology\\
Wuhan 430074, China
\AND
\name Jialu Bao \email jialu@cs.wisc.edu \\
\addr Department of Computer Sciences\\
Madison, WI 50706, USA
\AND
\name Bart Selman \email selman@cs.cornell.edu \\
\AND
\name John E. Hopcroft \email jeh@cs.cornell.edu\\
\addr Department of Computer Science\\
Cornell University\\
Ithaca, NY 14853
}

\editor{}

\maketitle

\begin{abstract}
Much of the complexity of social, biological, and engineered systems arises from a network of complex interactions connecting many basic components. Network analysis tools have been successful at uncovering latent structure termed communities in such networks. However, some of the most interesting structure can be difficult to uncover because it is obscured by the more dominant structure. Our previous work proposes a general structure amplification technique called HICODE that uncovers many layers of functional hidden structure in complex networks. HICODE incrementally weakens dominant structure through randomization allowing the hidden functionality to emerge, and uncovers these hidden structure in real-world networks that previous methods rarely uncover. 
In this work, we conduct a comprehensive and systematic theoretical analysis on the hidden community structure. 
In what follows, we define multi-layer stochastic block model, and provide theoretical support using the model on why the existence of hidden structure will make the detection of dominant structure harder compared with equivalent random noise. We then provide theoretical proofs that the iterative reducing methods could help promote the uncovering of hidden structure as well as boosting the detection quality of dominant structure.
\end{abstract}

\begin{keywords}
Hidden community detection · multi-layer stochastic block model · modularity optimization · social network
\end{keywords}

\section{Introduction}\label{section(Introduction)}
The complexity of many real-world systems emerges from the interactions of large numbers of interconnected components. The network underlying these interactions can uncover the structure or functionally related parts. The general approach of network analysis is to identify groups of network nodes that have a higher density of connections within a group than across groups. Such sets of highly interconnected nodes are referred to as “communities”~\citep{2002Community}, inspired by the notion of communities of individuals in the social network (e.g., friends or colleagues). The notion captures the intuition that having many “within the group” connections suggests a common or related functional role of the nodes, for instance the gene groups in Protein-Protein Interaction (PPI) networks~\citep{2010IsoBase}.

Over the recent decades, a rich set of tools have been developed to find communities in large networks~\citep{lancichinetti2009community,xie2013overlapping}. Most of these works are for the global community structure, including disjoint communities that partition the nodes~\citep{2008Fast, Rosvall2008Infomaps}, or overlapping  community  structure~\citep{palla2005uncovering, Ahn2010Link, Andrea2011OSLOM}. With the rapid growth of network scale, there is also an increasing trend of shifting the attention to the local community detection~\citep{clauset2005finding, kloster2014heat, LaarhovenM16JMLR, he2019krylov}.

All the above methods have shown promising results in analyzing real-world networks. Which algorithm performs best is often domain dependent. Overall, these methods perform the best when the various communities in the network are well-defined and of comparable strength, as defined by the ratio of the density of in-community edges versus the density of cross-community edges. However, real-world data can have different kinds of structure corresponding to different types of functional relationships. Some of the most interesting structure can be obscured by the more dominant structure and thus easily overlooked. For example, we may want to track the spread of a new, potentially lethal flu virus in social networks, but the signal may be hidden under the structure arising from the more frequent online social interactions. Similarly, the scientific literature co-author network in emerging interdisciplinary areas are often overshadowed by the far more dominant co-author relationships within traditional research areas. For another example, biologists are interested in identifying gene groups in Protein-Protein Interaction (PPI) networks \citep{2010IsoBase}. The clearest groupings are often already known to the field, and the more valuable discovery would be the hidden, less obvious groupings.
Such type of structure is very difficult for current algorithms to find because the strong primary structure drives the grouping of nodes (or “clustering”) towards the primary structure.

To resolve this issue and to properly handle communities of different strengths, which are induced by different structure in complex networks, our previous works~\citep{He15corr,he18} propose a new concept called the hidden community. A community is called \textit{the hidden structure} if most of the members also belong to other stronger communities, as evaluated by some community scoring function such as modularity~\citep{2002Community}. Note that this conception of ``hidden" differs from the normal task of community detection that detects the latent communities from observed interactions (\ie edges)~\footnote{Researchers in the literature sometimes also call the community to be mined the hidden community, which differs to our conception of ``hidden community". For consistency, we call all the communities to be mined the latent communities.}.

Our previous work~\citep{he18} also proposes a structure amplification technique called HICODE (HIdden COmmunity DEtection) for mining dominant communities as well as hidden communities formed by \textit{layers}, where each layer is formed by a set of disjoint communities or overlapping communities found by typical community detection methods. \cite{he18} further provide empirical demonstrations using synthetic networks as well as real-world networks that exhibit multiple layers of natural community structure.

The concept of hidden structure has attracted increasing attention since then.
\cite{he2019contextual} use the information between different structures for graph clustering. \cite{li2020single} are inspired to iteratively remove the reflection of a single image captured through a glass surface, in which they regard the transmission as the strong and dominant structure while the reflection the weak and hidden structure. \cite{nath2019detecting} focus on the intrinsic community and extend the hidden community detection to dynamic networks. 
\cite{gong2018finding} propose multi-granularity community detection method (MGCD) based on network embedding to detect the hidden communities. \cite{salz2019hidden} use HICODE on more real-world networks to verify the effectiveness of HICODE.

However, \cite{he18} did not provide theoretical support for the proposed HICODE. 
We provide theoretical analysis on HICODE using two-layer stochastic models in a preliminary version\footnote{This manuscript is a significant extension of our conference version~\citep{bao20}.}, and in this work we further develop a general theoretical guarantee of HICODE on the multiple layers of stochastic block models. Specifically, we prove that the modularity of target layer increases in the process of using HICODE, and explain why HICODE significantly improves the detection from both theoretical and simulation perspectives. Some intuitive examples are also provided to facilitate the understanding.

Meanwhile, we observe that if we organize the adjacency matrix by dominant structure, the hidden structure appears to be random noise; however, it is not truly random and we call it the structured noise. This not only raises a question on what is the difference between structured noise and random noise but also inspires us to explore whether the existence of the structured noise hinders detecting the dominant structure more. In this work, we also investigate these significant questions from a specific angle.

The structure of the manuscript is as follows. Section \ref{section(Preliminary)} briefly introduces HICODE and formally proposes the stochastic block models. In Section \ref{section(random)} we explain why the existence of hidden structure makes it hard for algorithm to even detect the dominant community structure, and why the detection of hidden structure is necessary. We start from the three-layer stochastic block models to demonstrate the effectiveness of HICODE in Section \ref{section(Theoretical analysis on three-layer Stochastic Block Model)} and expand the theoretical analysis to multi-layer stochastic block models in Section \ref{section(Theoretical analysis on Multi-layer Stochastic Block Model)}. We then conduct simulation to indicate the great performance of the algorithm in Section \ref{(section)Simulation of HICODE} and finally conclude the paper with some outlook of future works.

\section{Preliminary}\label{section(Preliminary)}

For preliminary, we first introduce the measuring metric of modularity for a single community as well as for a layer of communities. Then, we show how HICODE works to detect the hidden  structure by structure amplification. Also, we define multi-layer stochastic block model as an abstraction of real-world networks, and illustrate the running results of HICODE on a four-layer stochastic block model. In the end, we categorize the edges into different sets according to the different intersection of layers they belongs, which will be used in the followup theoretical analysis.

\subsection{Modularity}

\cite{2002Community} propose an important metric called \emph{modularity} to evaluate the strength for a set of communities that partitions the nodes of a network. A higher modularity indicates the denser the internal edges and the sparser the external edges of a community partition. The metric is widely used by community detection algorithms, such as the Louvain method~\citep{2008Fast}, which finds the modularity-maximizing partition.


\begin{definition}[Modularity of a Community]\label{Modularity of a community and a partition/layer}
Given a graph $G $ $= (V, E)$ with a total of $e$ edges and multiple layers of communities, where each layer of communities partitions all nodes in the graph, for a community $i$ in layer $l$, let $e_{ll}^i$ denote $i$'s internal edges, and $e_{lout}^i$ denote the number of edges having exactly one endpoint in community $i$. Let $d_l^i$ be the total degree of the nodes in community $i$ ($d_l^i = 2 e_{ll}^{i} + e_{lout}^i$). Then the modularity of community $i$ in layer $l$ is:
\begin{equation*}
Q_l^i = \frac{e_{ll}^{i}}{e} - \left(\frac{d_{l}^{i}}{2e} \right)^2.
\end{equation*}
\end{definition}
In this work, we will work on layers of communities often.
A single layer of communities corresponds to a partitioning or covering of the nodes in the network.
Thus, the modularity of a layer follows from the conventional definition of the modularity of a partition.

\begin{definition}[Modularity of a Layer]\label{def: layer modularity}
For a given graph $G$, for a layer $l$, say $l$ partitions all the nodes into disjoint communities $\{1, \dots, N\}$, then the modularity of this layer is $Q_l = \sum_{i=1}^N Q_l^i$, where $Q_l^i$ is the modularity of a community on the $G$ defined above.
\end{definition}

Intuitively, the higher fraction of internal edges a community has among all edges, the more connected its members are, and thus a higher modularity of the community.

\subsection{The HICODE Algorithm}

HICODE~\citep{he18} is a structure amplification approach that repeatedly modifies the network under consideration. To find communities in the modified network, it uses a given community finding method, referred to as the ``base algorithm''.
We show the key procedure of the amplification of base algorithm $\mathbf{X}$ in Figure~\ref{figure:flowchart}.

The approach assumes the network consists of distinct layers of communities.
 The amplification technique starts with an initial estimate of the communities using the base algorithm \textbf{X} in $k$ distinct layers (the identification stage). HICODE subsequently improves the community structure in each layer in an iterative manner (the refinement stage). The basic idea for improving the approximation of a particular layer $l$ is to first weaken all structure resulted from communities from the other layers, and then recompute the community structure using the base algorithm \textbf{X}. The weakened network will reveal the structure of $l$-th layer more clearly than the original one. Such structure amplification is repeated for each layer in the network.

\begin{figure}[htb]
\center{\includegraphics[width=8cm] {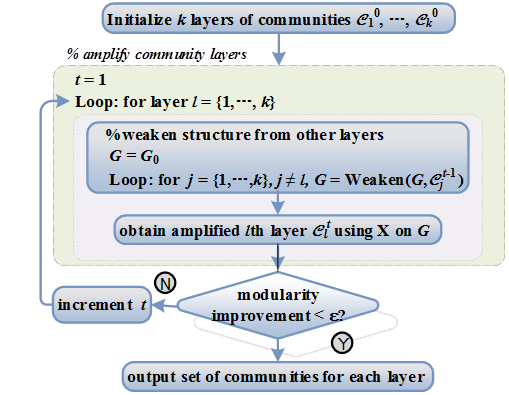}}
\caption{\label{figure:flowchart} Amplification procedure using the base algorithm $\mathbf{X}$. $G_0$ is the input network. $\mathcal{C}_j^t$ gives the set of communities in layer $j$ after iteration $t$.}
\end{figure}

There are three reducing methods for the Weaken($G,\mathcal{C}$) function.
\begin{itemize}
	\item $\bf{RemoveEdge:}$ RemoveEdge weakens the structure of a detected layer $l$ by removing internal edges of all communities of layer $l$, that is, all connections that could come from the communities in $l$. 

	\item $\bf{ReduceEdge:}$ Instead of removing all internal edges of $l$, ReduceEdge probabilistically removes some internal edges of each community $i$ in layer $l$ so that the edge density in community $i$, denoted $\widehat{p_{l}^{i}}$, matches with the edge density in the background, denoted $\widehat{q_{l}^{i}}$.
	$\widehat{p_{l}^{i}}$ and $\widehat{q_{l}^{i}}$ are approximated by the local information as follows:
\begin{align}
	\label{defn_qi}
\widehat{p_{l}^{i}} = \frac{e_{ll}^{i}}{\frac{1}{2}s_{l}^{i}(s_{l}^{i}-1)},   \qquad \qquad \widehat{q_{l}^{i}} = \frac{e_{lout}^{i}}{s_{l}^{i}(n-s_{l}^{i})},
\end{align}
where $e_{ll}^{i}$ and $e_{lout}^{i}$ represent the internal edges and outgoing edges of community $i$ in layer $l$, $s_{l}^{i}$ and $n$ represent the size of community $i$ in $l$ and the number of nodes. Thus, in order to reduce $\widehat{p_{l}^{i}}$ to $\widehat{q_{l}^{i}}$ we can keep each internal edge with probability $q_{l}^{i}=\widehat{q_{l}^{i}}/\widehat{p_{l}^{i}}$, indicating that we randomly remove each internal edge of community $i$ with probability $1-q_{l}^{i}$. 

\item $\bf{ReduceWeight:}$ ReduceWeight can be considered as the counterpart and de-random\\-ization of ReduceEdge on the weighted graph and it also needs to approximate the retention probability $q_{l}^{i}$. ReduceWeight weakens structures in layer $l$ by multiplying the weight of all edges in community $i$ by $q_{l}^{i}$.
\end{itemize}

\subsection{Stochastic Block Model}
To reflect the fact that real-world networks often exhibit patterns in the high level but exhibit uncertainties in the edge level, 
researchers model them through stochastic block model ~\citep{abbe2017community,deng2021strong}.
In particular, we use \emph{multi-layer stochastic block model} to model networks with multiple layers of communities.
Roughly, a single-layer stochastic block model network consists of blocks (i.e., communities), each of which is a Erd\H{o}s-R\'{e}nyi graph $G(n, n_l, p_l)$; 
a multi-layer stochastic block model network is an union of multiple independent single-layer stochastic block model networks on a shared set of nodes.

\begin{definition}[Multi-layer stochastic block model] \label{(definition)Multi-layer stochastic block model}
	A multi-layer stochastic block model network $G(n, n_1, n_2,$ $...,n_L, p_1, p_2,..., p_L )$, in which $n, n_{1}, n_{2},..., n_{L} \in N^{+}$,
	has $n$ nodes and $L$ layers.
For $l = 1, 2,..., L$, layer $l$ of $G$ consists of $n_l$ planted communities of size $s_{l} = \frac{n}{n_{l}}$, and pairs of nodes within the same community form edges with probability $p_{l}$.

Communities in different layers are grouped independently, that is,
for any $k$ layers $\{l_{1}, l_{2},...,$ $ l_{k}\}$ $(2 \leq k \leq L)$, if we pick a community $c_i$ from each layer $l_i$, then the expected number of nodes in the intersection of $c_1, \dots, c_k$ is always $r_{l_{1}l_{2}...l_{k}} = \frac{n}{n_{l_{1}}n_{l_{2}}...n_{l_{k}}}$.
\end{definition}

The multi-layer stochastic block model could represent an ideal case where there is no noise and all outgoing edges of one layer are internal edges of some other layers, and each community of layer $l$ is expected to have $ \frac{1}{2} s_{l} \cdot (s_{l}-1) \cdot p_l$ internal edges.
In fact, the number of communities and generation probability of an edge in real networks tend not to be too small, we will assume that $n_{l} \geq 4$ and $p_{l} \in [0.05,1]$ for any layer $l$ in the following.
Besides, to make sure every intersection block has at least two nodes, we require $n \geq 2\prod_{l=1}^{L} n_l$.
At last, as communities in the same layer are expected to have equal sizes and are both independent of any communities in other layers,
we denote the expected number of internal edges (resp. outgoing edges) in any community of layer $l$ as $e_{ll}$ (resp. $e_{lout}$).
Note that uniform random noise could be added through setting an additional layer with $n_l = 1$ and $p_l$ be the noise generation probability.

\begin{lemma}\label{jialu2020}%
For layer $l$ in the stochastic block model, if the layer weakening method (one of RemoveEdge, ReduceEdge, ReduceWeight) reduces a bigger percentage of outgoing edges than internal edges, \ie the expected number of internal and outgoing edges after weakening, $e'_{ll}$ and $e'_{l out}$, satisfy $\frac{e'_{l out}}{e_{l out}} < \frac{e'_{l l}}{e_{l l}}$, then the modularity of layer $l$ increases after the weakening method.
\end{lemma}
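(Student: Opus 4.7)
The plan is to exploit the symmetry of the multi-layer stochastic block model. Since all $n_l$ communities in layer $l$ are statistically interchangeable, their expected internal-edge counts $e_{ll}^i$ and outgoing-edge counts $e_{lout}^i$ coincide across $i$, and the fact that layer $l$ partitions the vertex set lets me write the total edge count as $e = n_l(e_{ll} + e_{lout}/2)$ and the per-community degree as $d_l^i = 2e_{ll} + e_{lout}$. The null-model term $\bigl(d_l^i/(2e)\bigr)^2$ therefore collapses to $1/n_l^2$, independent of the numerical values of $e_{ll}$ and $e_{lout}$ as long as symmetry across communities is preserved.

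Plugging this into Definition~\ref{def: layer modularity}, $Q_l$ reduces to the closed form
\[
Q_l \;=\; \frac{2 e_{ll}}{2 e_{ll} + e_{lout}} \;-\; \frac{1}{n_l}.
\]
The identical derivation after the weakening yields $Q'_l = \frac{2 e'_{ll}}{2 e'_{ll} + e'_{lout}} - \frac{1}{n_l}$. Subtracting, cross-multiplying, and canceling the common $4 e_{ll} e'_{ll}$ term, $Q'_l > Q_l$ is seen to be equivalent to $\frac{e'_{ll}}{e_{ll}} > \frac{e'_{lout}}{e_{lout}}$, which is precisely the hypothesis of the lemma. The algebraic engine is the elementary monotonicity of $x/(x+y)$ in $x$ when $y>0$.

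The only step requiring care is confirming that the symmetry across the $n_l$ communities of layer $l$ is preserved in expectation under each of RemoveEdge, ReduceEdge, and ReduceWeight. Since every such scheme acts on an edge either by a fixed deterministic rule or by an independent Bernoulli trial whose parameter depends only on community-level statistics, and since the joint law of the SBM edges is invariant under relabelings of the communities within any one layer, this invariance carries over to the weakened network. Once the invariance is in hand, the collapse $d_l^i/(2e) = 1/n_l$ survives the weakening, and the short algebraic computation above delivers the lemma. I expect the main obstacle to be making this symmetry argument fully rigorous for ReduceEdge, whose Bernoulli trials are indexed by individual edges and must be shown to preserve the per-community distributional symmetry in layer $l$ (as opposed to merely matching first moments).
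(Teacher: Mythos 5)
Your proposal is correct and follows essentially the same route as the paper: both exploit the symmetry of communities within a layer to write $e = n_l d_l/2$, collapse the layer modularity to the closed form $Q_l = \frac{2e_{ll}}{2e_{ll}+e_{lout}} - \frac{1}{n_l}$, and then reduce $Q'_l > Q_l$ to the hypothesis $\frac{e'_{ll}}{e_{ll}} > \frac{e'_{lout}}{e_{lout}}$ by monotonicity of $x/(x+y)$. Your closing remark about verifying that the weakening methods preserve the cross-community symmetry in expectation is a point the paper leaves implicit, but it is a refinement of the same argument rather than a different approach.
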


\begin{proof}
Since all communities in a layer have the same number of nodes,
internal edges, and outgoing edges, in expectation, we have $e=\frac{n_{l}d_{l}}{2}$.
\begin{align}
Q_l &= n_l \cdot \left[ \frac{e_{ll}}{e} - \left(\frac{d_l}{2e} \right)^2 \right] = n_l \cdot \left[ 2\frac{e_{l l}}{n_l\cdot d_l} - \left (\frac{d_l}{n_l \cdot d_l } \right)^2 \right] \notag\\
&= 2 \cdot \frac{e_{ll}}{ d_l} - \frac{1}{n_l}  
= 1 - \frac{e_{lout}}{e_{lout} + 2e_{l l}} - \frac{1}{n_l}. \label{(equation)(jialu2020)}
\end{align}

If  $\frac{e'_{l out}}{e_{l out}} < \frac{e'_{l l}}{e_{l l}}$ after weakening,
then $\frac{e'_{lout} + 2e' _{l l}}{e'_{lout}} > \frac{e_{lout} + 2e _{l l}}{e_{lout}}$,
and thus the modularity of the original layer on the weakened graph, $Q_l'$ will
exceed $Q_l$.
Intuitively, if more percentage of outgoing edges of layer $l$ is reduced than internal edges, there are less percentage of cross-community edges, and then the modularity of layer $l$ increases.
\end{proof}

\subsection{Illustration of HICODE Results on a Stochastic Block Model}

As an illustrative example on how HICODE runs, consider
a synthetic network built by the four-layer stochastic block model plus a layer
of noise.  There are 240 nodes, and the four layers have 10, 8, 6, 5
communities of size roughly 24, 30, 40 and 48 respectively. The inner edge probability for
different layers are 0.30, 0.22, 0.17 and 0.10, so that the modularity measures
are descending and deeper layers are weaker, with modularity scores of 0.231,
0.210, 0.197 and 0.130.  Additional edges are added to the network with
probability 0.005 as the background noise.  For example, this synthetic network can represent
a network of academic papers where papers can be clustered by topic
(physics, chemistry, or mathematics, \dots), by type of article (survey, expository, research, tool paper, \dots), by type of publication (journal, book, conference proceedings, \dots), and by the native language of authors (English, Spanish, Chinese, \dots).

Figure \ref{figure:synL4_results} illustrates snapshots at different times of the four detected layers during the amplification process.
After 60 iterations, the approach has almost fully uncovered all community structure in all four layers. Each community found by HICODE (using Louvain as the base algorithm) has a near perfect match with one of the embedded communities, and vice versa. Other methods find less than half the structure for any of the four layers. Some methods just yield the whole network as the only community.

\begin{figure}[htb!]
\center{
\includegraphics[width=0.9\textwidth]  {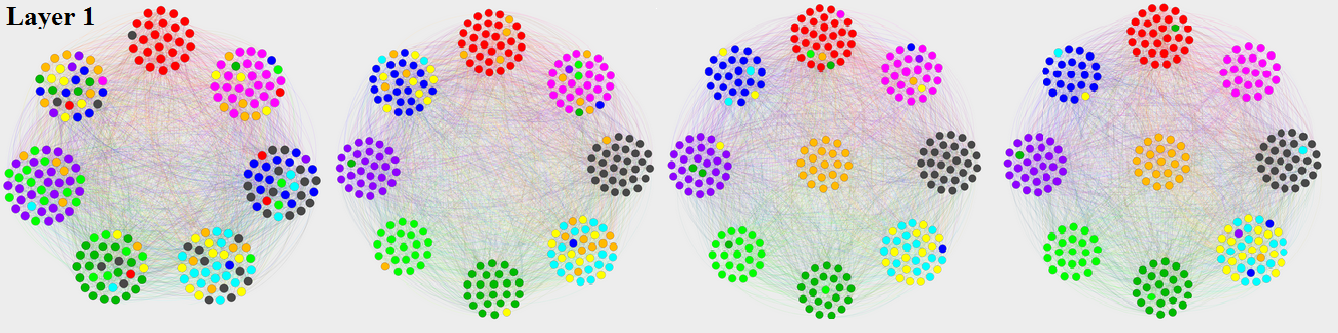}\\
\includegraphics[width=0.9\textwidth]  {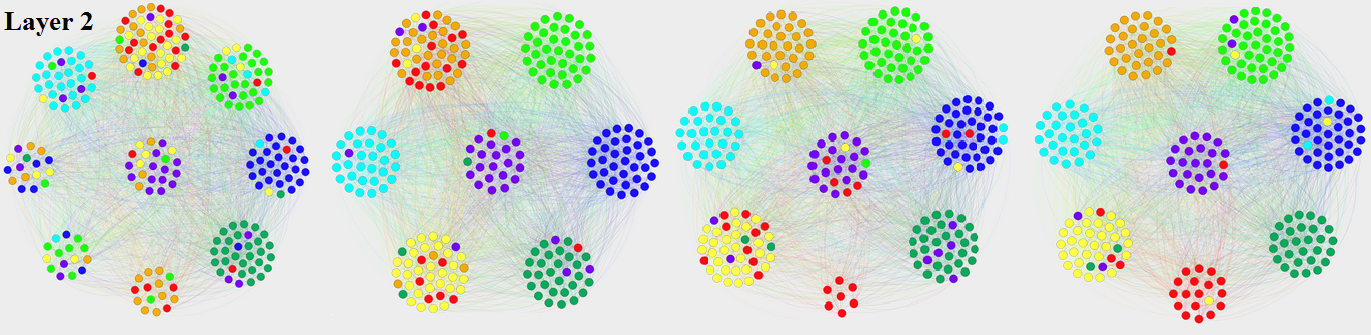}\\
\includegraphics[width=0.9\textwidth]  {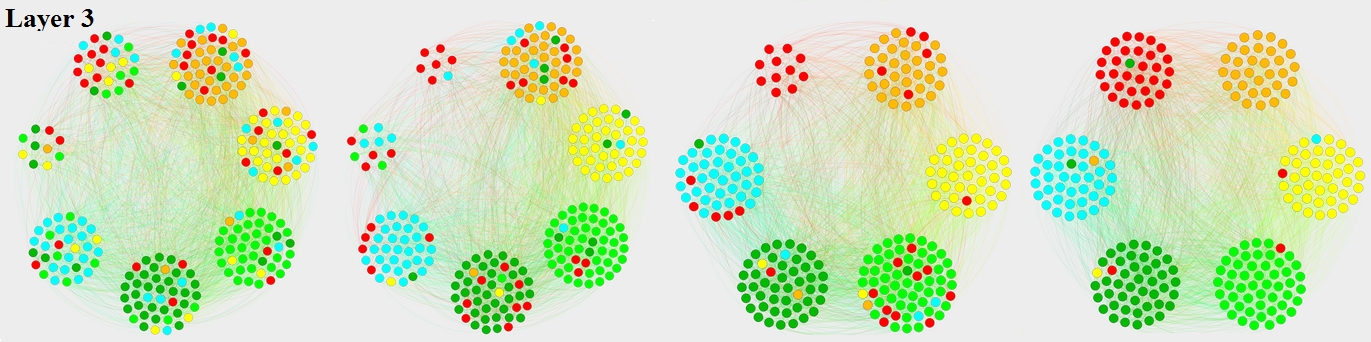}\\
\includegraphics[width=0.9\textwidth]  {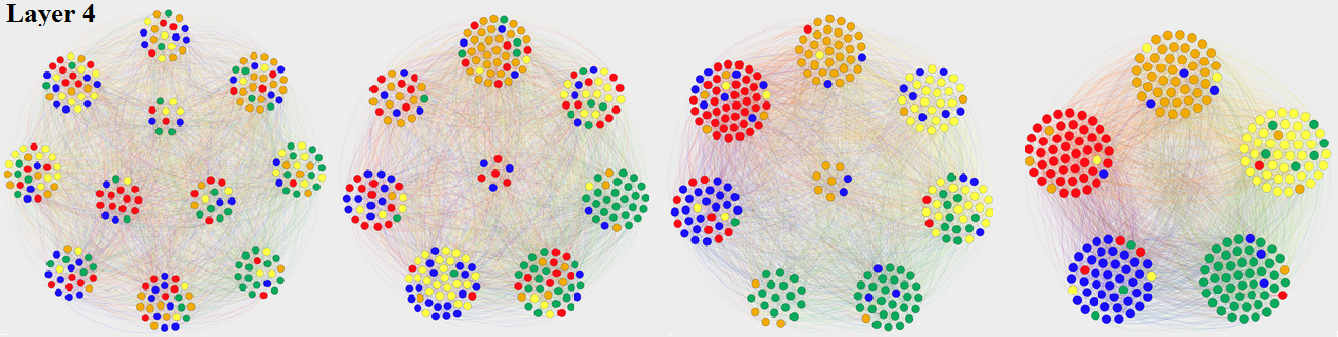}\\
\text{~~~~$t= 0$~~~~~~~~~~~~~~~~~~$t= 20$~~~~~~~~~~~~~~~~~~~$t = 30$~~~~~~~~~~~~~~~~~~~~$t = 60$~~}
}
\caption{Illustration of the structure amplification for the four layers of the SBM network. For each layer, we color nodes from each ground truth community with a distinct color and group nodes in one estimated community into one cluster.
The leftest panel ($t = 0$) indicates the communities initially detected. In Layer 1, the base algorithm finds 7 communities, with only one of them clearly corresponding to a ground truth community (red nodes). Similarly, after weakening the structure of the initial three layers, at $t = 0$, the base algorithm finds 11 communities in Layer 4, but none that corresponds to a ground truth community (all communities have a mixture of colors). The following snapshots ($t = 20, 30, 60$) show how the four layers are gradually uncovered as the coloring of nodes in each community becomes more uniform over time. At $t$ = 60, almost all communities are nearly monochromatic, and therefore correspond well to the ground truth communities.}
\label{figure:synL4_results}
\end{figure}

Table \ref{(tab)Acc_comp} shows the accuracy comparison of HICODE and several
typical community detection methods:
Louvain~\citep{2008Fast} (the base algorithm), OSLOM~\citep{lancichinetti2011finding}, LC~\citep{Ahn2010Link}, GCE~\citep{lee2010detecting}, DEMON~\citep{coscia2012demon}, CFinder~\citep{gregory2009finding} and CONGA~\citep{palla2005uncovering}.
We will compare their results via Jaccard Recall, Precision and F1 score, which measure the accuracy of the detected communities compared to annotated communities (“ground truth”) in the network. Given two communities, $A$ and $B$, the Jaccard similarity is given by the ratio between the size of the intersection of $A$ and $B$ and the size of the union of $A$ and $B$, that is, $|A \cap B| / |A \cup B|$. For each annotated community, its individual Jaccard Recall is the greatest Jaccard similarity over all detected communities. For each detected community, its individual Jaccard Precision is the greatest Jaccard similarity over all annotated (ground truth) communities. The Jaccard Recall $R$ is the average of individual Recalls among annotated communities, weighted by the fraction of nodes in each annotated  community. The Jaccard Precision $P$ is defined similarly. The Jaccard F1 score $F$ is the harmonic mean of $R$ and $P$.

\begin{table*}[htb]
\centering
\resizebox{\textwidth}{35mm}{
\begin{tabular}{lr|cccccc|c|cccc}
\toprule
& & \multicolumn{6}{c|}{\textbf{Overlapping Algorithms}} & \textbf{Base} & \multicolumn{4}{c}{\textbf{HICODE(Louvain)}} \\
\textbf{SynL4} & & \textbf{OSLOM} & \textbf{LC} & \textbf{GCE} & \textbf{DEMON} & \textbf{CFinder} & \textbf{GONGA} & \textbf{Louvain} & $L_1$ & $L_2$ & $L_3$ & $L_4$ \\
\midrule
$PL_1$ & $R$ & 0.28 & 0.10 & 0.10 & 0.16 & 0.23 & 0.23 & 0.42 & \textbf{0.92} & 0.12 & 0.13 & 0.12\\
    &    $P$ & 0.31 & 0.12 & 0.12 & 0.16 & 0.14 & 0.19 & 0.50 & \textbf{0.92} & 0.12 & 0.13 & 0.14\\
    &    $F$ & 0.29 & 0.11 & 0.11 & 0.16 & 0.17 & 0.21 & 0.46 & \textbf{0.92} & 0.12 & 0.13 & 0.13\\
\midrule
$PL_2$ & $R$ & 0.23 & 0.13 & 0.13 & 0.17 & 0.17 & 0.22 & 0.21 & 0.11 & \textbf{0.93} & 0.13 & 0.13\\
    &    $P$ & 0.25 & 0.15 & 0.15 & 0.18 & 0.12 & 0.19 & 0.22 & 0.11 &\textbf{0.93} & 0.15 & 0.14\\
    &    $F$ & 0.24 & 0.14 & 0.14 & 0.18 & 0.14 & 0.20 & 0.21 & 0.11 &\textbf{0.93} & 0.14 & 0.13\\
\midrule
$PL_3$ & $R$ & 0.30 & 0.17 & 0.17 & 0.20 & 0.18 & 0.25 & 0.21 & 0.13 & 0.13 & \textbf{0.90} & 0.14\\
    &    $P$ & 0.29 & 0.20 & 0.20 & 0.20 & 0.13 & 0.21 & 0.19 & 0.13 & 0.12 & \textbf{0.91} & 0.14\\
    &    $F$ & 0.30 & 0.19 & 0.19 & 0.20 & 0.15 & 0.23 & 0.20 & 0.13 & 0.13 & \textbf{0.90} & 0.14\\
\midrule
$PL_4$ & $R$ & 0.19 & 0.20 & 0.20 & 0.22 & 0.20 & 0.20 & 0.16 & 0.14 & 0.15 & 0.15 & \textbf{0.85}\\
    &    $P$ & 0.19 & 0.22 & 0.22 & 0.22 & 0.12 & 0.16 & 0.15 & 0.12 & 0.13 & 0.15 & \textbf{0.85} \\
    &    $F$ & 0.19 & 0.21 & 0.21 & 0.22 & 0.15 & 0.18 & 0.15 & 0.13 & 0.14 & 0.15 & \textbf{0.85}\\
\midrule
\# Comm & & 6 & 1 & 1 & 9 & 71 & 29 & 7 & 10 & 8 & 6 & 5 \\
\bottomrule
\end{tabular}}
\caption{Accuracy  comparison  for  HICODE  and  several  typical  community  detection methods. The rows $P$, $R$, $F$ indicate Jaccard Precision, Recall, and F1 scores respectively.
For a layer $k$, $PL_k$ indicates the planted communities, $L_k$ indicates the detected communities by HICODE.}\label{(tab)Acc_comp}
\end{table*}

Intuitively speaking, each layer of communities corresponds to structure of a particular type of global functionality.
The strength of community structure can vary significantly between layers but the iterative amplification strategy focuses on individual layers and can therefore detect layers of significantly different strength. This allows us to uncover very weak structure in deeper layers that other methods cannot detect.

\subsection{Edge Set for the Intersection of Layers}

For a multi-layer stochastic block model graph, we categorize edges that belong to the intersection of internal blocks of different layers.
For instance in a two-layer stochastic block model $G(100,4,5,0.5,0.4)$,
there are three edge sets, as illustrated in Figure \ref{fig:AMo2sbm} by different coloring of edges in the adjacency matrix. Blue (resp. red) elements represent edges only internal to layer 1 (resp. layer 2), while green elements represent edges internal to both layers. They have different generation probabilities, $0.5, 0.4,$ and $1 - (1 - 0.5)(1 - 0.4) = 0.7$, respectively.


\begin{figure}[htbp]  
    \centering
    \subfloat[Ordered by blocks in layer 1]
    {
        \begin{minipage}[t]{0.5\textwidth}
            \centering
            \includegraphics[width=0.6\textwidth]{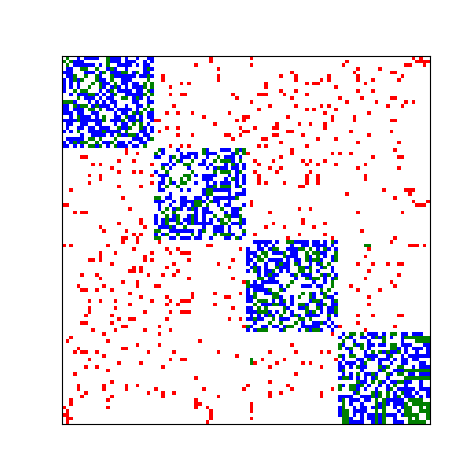}
        \end{minipage}%
    }
    \subfloat[Ordered by blocks in layer 2]
    {
        \begin{minipage}[t]{0.5\textwidth}
            \centering
            \includegraphics[width=0.6\textwidth]{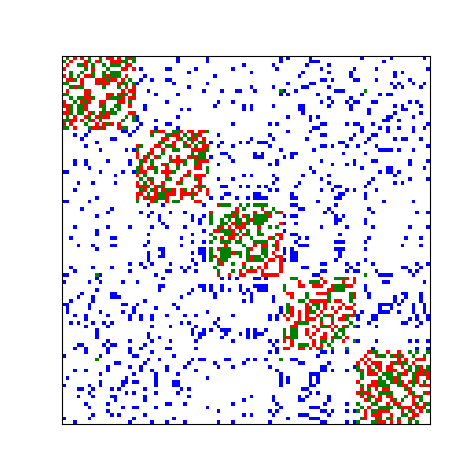}
        \end{minipage}
    }%

    \caption{Three edge sets for a two layer SBM (illustrated in different colors).}
    \label{fig:AMo2sbm}
\end{figure}

\begin{definition}[Edge Set]
\label{Set of different edges in the graph}
For a stochastic block graph model $G(n,$ $ n_1, n_2$ $,...,n_L, p_1, p_2,...,$ $ p_L)$, we divide edges into different sets according to the common layers they belong to: set $S_{l_{1}l_{2},...,{l_{k}}}$ represents edges that are internal to the intersection of blocks of layers $l_{1}$, $l_{2}$,...,$l_{k}$.
\end{definition}


For $G(n, n_1, n_2,$ $...,n_L, p_1, p_2,..., p_L)$,
due to the independent generation probability of different layers, we use $p_{l_{1}l_{2}..l_{k}} = 1 -\prod_{\{l_{1}l_{2}..l_{k}\}}(1-p_{l_{i}})$ to denote the generation probability of an edge in set $S_{l_{1}l_{2},...,{l_{k}}}$. We also let $|\overline{S_{l_{1}l_{2},...,{l_{k}}}}|$ denote the expected number of edges.

\section{Structured Noise versus Random Noise}\label{section(random)}

In some real world networks there exist a layer of dominant structure and several layers of secondary or hidden structures~\citep{he18}. In this section, using the synthetic network of stochastic block model, we show that these hidden structures, which appear to be random noise but indeed are not independently random, can create problems for typical community detection algorithms on real-world data.


Specifically, for the adjacent matrix of a two-layer stochastic block model, when we group nodes using partitions in the dominant layer, edges in the hidden layer are scattered like noises, as illustrated in Fig. \ref{fig:AMo2sbm} (a), which we refer to as the structured noise.
We will explain why having structured noise makes the discovery of the dominant communities harder than just having the random noise.
The high-level intuition is that:
most state-of-the-art community detection algorithms employ strategies to find partitions with high modularity (or similar measures on connectivity);
with the presence of structured noise, the partition that matches the dominant layer will exhibit lower modularity, and there will exist other partitions that locally maximize the modularity, making it harder for the detection of dominant layer.

More precisely, consider a two-layer stochastic block model $G(n,n_1, n_2,$ $ p_1, p_2)$, denoted as shorthand \Gstruct, and a one-layer stochastic block model with noise, which could be generated by $G(n, n_1, 1, p_1, p_n)$, denoted as shorthand \Grandom. To control the variables in comparison, we set $p_n$ such that the expected number of edges generated as noise in \Grandom is equal to the expected number of edges generated for layer 2 in \Gstruct, that is, $p_n = \frac{s_{2}-1}{n-1} \cdot p_{2}$ where $s_2=\frac{n}{n_2}$.

\begin{figure}[htb]
\center{\includegraphics[width=15cm]  {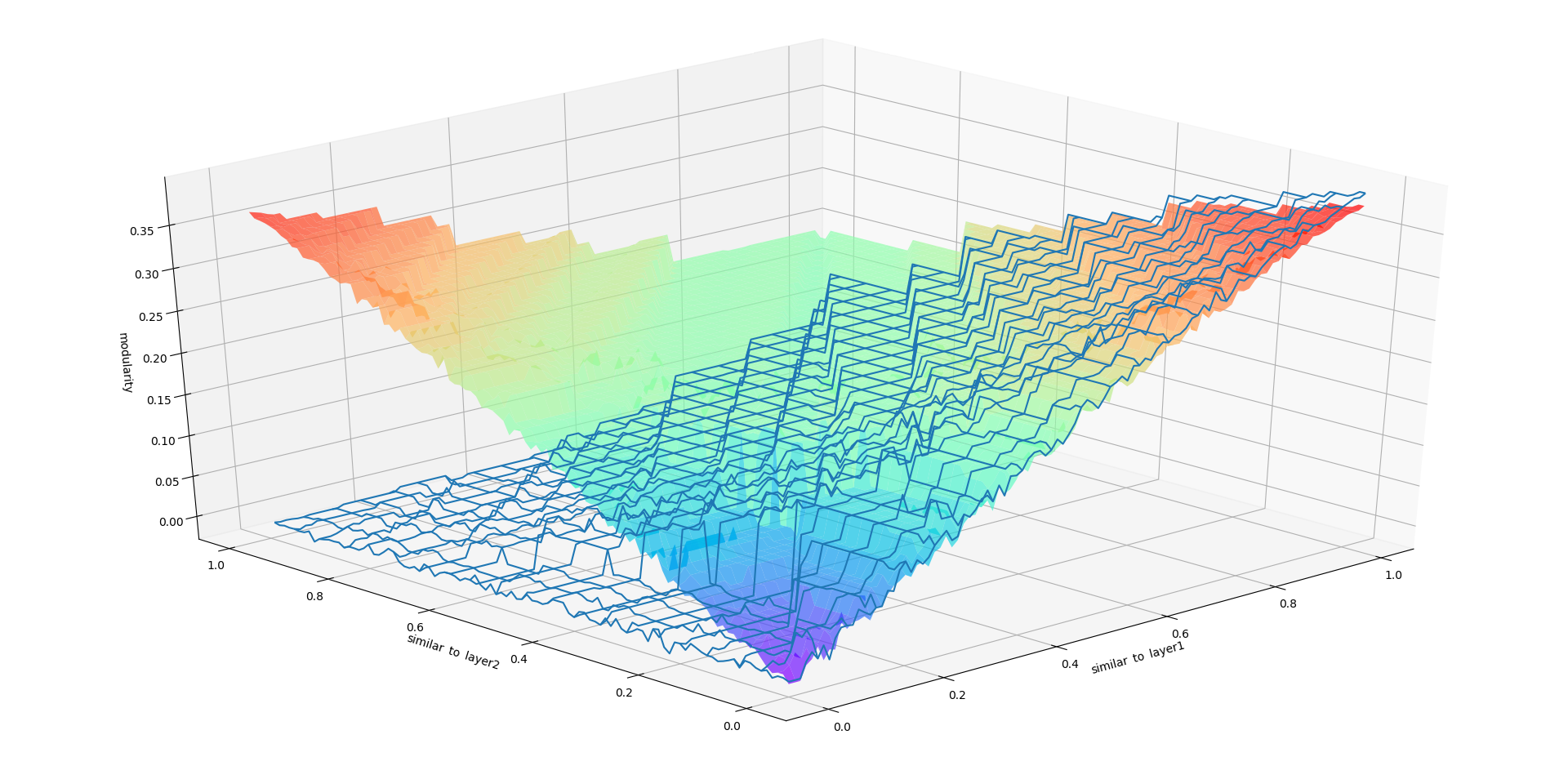}}
\caption{Illustration of the modularity scores of different partitions for \Gstruct (rainbow surface) versus \Grandom (blue wireframe).} 
\label{(figure)modularity_of_two_model}
\end{figure}

To pictorially show how the structured noise creates more difficulties than purely random noise,
we simulate with $n, n_1, n_2, p_1, p_2$ equal to 200, 4, 5, 0.10, 0.11 respectively and visualize the modularity of various partitions on \Gstruct and \Grandom.
In~\cref{(figure)modularity_of_two_model},
we use a partition's similarity with the ground truth community layers $l_1$ and $l_2$ as its $x$ and $y$-coordinates and take its modularity score as its $z$-coordinate, and the similarity is measured by the normalized mutual information score \citep{danon2005comparing}.
The rainbow surface shows the partitions' modularity in \Gstruct and the blue wireframe shows their modularity in \Grandom.
The right-end point in the plane (coordinates $(1,0)$) represents the dominant layer $l_1$, and the figure shows that its modularity is slightly lower in \Gstruct than in \Grandom.
More significantly, the modularity of the left-end point (coordinates $(0,1)$), which represents the hidden layer $l_2$, is much higher in \Gstruct than in \Grandom. And more importantly, it forms another local maximum in the rainbow surface; the existence of two local peaks indicates a harder optimization task in the community detection.
In the subsequent subsections, we will analyze the expected modularity rigorously for the general $n, n_1, n_2, p_1, p_2$.


\subsection{Structured Noise Reduces More Quality}
Although the structured noise looks similar to purely random noise,
we can prove that they affect modularity differently.
Denote the modularity of $l_1$ on the \Gstruct and \Grandom
as $\Mstruct$ and $\Mrandom$. We will show that
$\Mstruct_{l_1} < \Mrandom_{l_1}$.
By Lemma~\ref{jialu2020}, we have 
\begin{align}\label{(equation)jialu2020_in_why_use_hicode}
\Mstruct_{l_1}< \Mrandom_{l_1} \iff
\frac{e_{11}^{\textsf{S}}}{e_{1out}^{\textsf{S}}} < \frac{e_{11}^{\textsf{R}}}{e_{1out}^{\textsf{R}}},
\end{align}
where $e_{11}^{\textsf{S}}$ ($e_{11}^{\textsf{R}}$) and $e_{1out}^{\textsf{S}}$ ($e_{1out}^{\textsf{R}}$) represent the  number of internal edges and outgoing edges 
in model \Gstruct (\Grandom).
To prove the right inequality, we use a few more definitions and lemmas.

Recall in~\cref{Set of different edges in the graph}, we can divide
the edges in \Gstruct into sets $S_{1}, S_{2}$, and $S_{12}$,
and divide the edges in \Grandom into sets $S_{1N}$ and $S_N$ (in \Grandom, there is no node pair that is in $l_1$ but not the noise layer $N$. ).
Then, we will use these edge sets to calculate
the expected number of internal edges and outgoing edges of $l_1$ in
different models.
Denote the expected size of a set $S$ as $|\overline{S}|$.

\begin{lemma}\label{(lemma)ell_and_elout_2_layer_model}
The numbers of internal edges and outgoing edges for $l_1$ in \Gstruct and \Grandom are respectively:
\begin{align*}
&e^{\textsf{S}}_{11} = \frac{1}{ n_1}(|\overline{S_{12}}| + |\overline{S_1}|),
&e^{\textsf{S}}_{1out} = \frac{2}{n_1} |\overline{S_2}|,\\
&e^{\textsf{R}}_{11} = \frac{1}{n_1}|\overline{S_{1N}}|,
&e^{\textsf{R}}_{1out} = \frac{2}{n_1} |\overline{S_N}|.\\
\end{align*}
\end{lemma}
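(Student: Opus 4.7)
The plan is a direct counting argument using linearity of expectation and the exchangeability of blocks in the SBM. Recall that, following the convention introduced just before \cref{jialu2020}, the quantities $e_{11}^{\textsf{S}}$ and $e_{1\text{out}}^{\textsf{S}}$ denote expectations for a \emph{single} layer-$1$ community, which by symmetry equals $1/n_1$ times the corresponding sum over all $n_1$ layer-$1$ communities. I would therefore compute the totals first and then divide.

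For \Gstruct, I classify each potential edge by which block-intersections its two endpoints share. By \cref{Set of different edges in the graph}, an edge lies in $S_1$ when its endpoints share a layer-$1$ block but not a layer-$2$ block, in $S_{12}$ when they share both, and in $S_2$ when they share only a layer-$2$ block. Therefore the edges whose endpoints sit inside a common layer-$1$ community are exactly those in $S_1 \cup S_{12}$, and each such edge is internal to precisely one layer-$1$ community. Summing over the $n_1$ communities and taking expectations gives $\sum_{i=1}^{n_1} e_{11}^{\textsf{S},i} = |\overline{S_1}| + |\overline{S_{12}}|$, hence $e_{11}^{\textsf{S}} = \frac{1}{n_1}(|\overline{S_1}| + |\overline{S_{12}}|)$. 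An edge with exactly one endpoint in a given layer-$1$ community must join two distinct layer-$1$ blocks, and therefore belongs to $S_2$; conversely, every $S_2$ edge joins two different layer-$1$ communities, so it contributes once to the outgoing count of each of the two communities it touches. This double counting gives $\sum_i e_{1\text{out}}^{\textsf{S},i} = 2|\overline{S_2}|$, so $e_{1\text{out}}^{\textsf{S}} = \frac{2}{n_1}|\overline{S_2}|$.

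For \Grandom, the noise layer $N$ is a single block containing all $n$ nodes, so $S_{1N}$ consists of precisely the edges whose endpoints share a layer-$1$ block, and $S_N$ consists of edges whose endpoints share only the noise block, i.e.\ edges that cross layer-$1$ blocks. The same counting argument, now with $S_{1N}$ playing the role of $S_1 \cup S_{12}$ and $S_N$ playing the role of $S_2$, yields $e_{11}^{\textsf{R}} = \frac{1}{n_1}|\overline{S_{1N}}|$ and $e_{1\text{out}}^{\textsf{R}} = \frac{2}{n_1}|\overline{S_N}|$.

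There is no real obstacle in this lemma; the only points that require care are the factor of $2$ produced by the two-endpoint double counting of outgoing edges, and the appeal to SBM exchangeability that collapses the total expected counts to a per-community average. Everything else is a direct expansion of the definitions of $e_{ll}^i$, $e_{l\text{out}}^i$ from \cref{Modularity of a community and a partition/layer}, together with the edge partition from \cref{Set of different edges in the graph}.
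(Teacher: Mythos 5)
Your proposal is correct and follows essentially the same route as the paper, whose own proof is a one-line appeal to the facts that each internal edge belongs to exactly one community while each outgoing edge is counted by two communities; you have simply expanded that double-counting argument, together with the identification of internal edges with $S_1 \cup S_{12}$ (resp.\ $S_{1N}$) and outgoing edges with $S_2$ (resp.\ $S_N$) and the symmetry that justifies dividing totals by $n_1$. No gaps; your version just makes explicit what the paper leaves implicit.
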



\begin{proof}
	Immediate from the fact that each outgoing edge is counted as outgoing for two communities, and that each internal edge is internal for exact one community.
%
%
\end{proof}

The expected size of different edge sets are functions of parameters in the stochastic block model.

\begin{lemma}[Expected number of edges in different edge sets on \Gstruct and \Grandom]\label{(lemma)(why use hicode)The maximum number of edges in different set}
The value of $|\overline{S_{1}}|$,$|\overline{S_{2}}|$,$|\overline{S_{12}}|$,$|\overline{S_{N}}|$ ,and $|\overline{S_{1N}}|$ are respectively:
\begin{align*}
&|\overline{S_{1}}| = \frac{1}{2}n(s_{1}-r_{12})p_{1}, \qquad |\overline{S_{2}}| = \frac{1}{2}n(s_{2}-r_{12})p_{2},\\
&|\overline{S_{12}}| =  \frac{1}{2} n(r_{12}-1)p_{12}, \qquad
|\overline{S_{N}}| = \frac{1}{2} n(n-s_{1})p_{n},\\
&|\overline{S_{1N}}| =\frac{1}{2}n(s_{1}-1)p_{1n}.\\
\end{align*}
\end{lemma}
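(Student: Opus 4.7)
The plan is to count, for each edge set, the expected number of unordered node pairs whose membership pattern puts them in that set, and then multiply by the appropriate edge-generation probability, using the layer-independence built into \cref{(definition)Multi-layer stochastic block model}.

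For $|\overline{S_{12}}|$, I would note that if we fix a node $v$, then by independence of the layer-1 and layer-2 community assignments, the intersection of $v$'s layer-1 and layer-2 blocks has expected size $r_{12}$; the remaining $r_{12}-1$ nodes are exactly $v$'s candidate partners for an edge in $S_{12}$. Summing over all $n$ nodes and dividing by two to avoid double counting gives $\tfrac{1}{2}n(r_{12}-1)$ unordered pairs, each of which becomes adjacent with probability $p_{12}=1-(1-p_1)(1-p_2)$ since the two layers generate edges independently. This yields $|\overline{S_{12}}|=\tfrac{1}{2}n(r_{12}-1)p_{12}$.

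Next, for $|\overline{S_1}|$ I would count pairs that share a layer-1 block but lie in different layer-2 blocks. A fixed node $v$ has $s_1-1$ partners in its layer-1 community; by the same independence argument $r_{12}-1$ of these also share its layer-2 community, leaving $s_1-r_{12}$ partners that share only the layer-1 block. Halving and multiplying by $p_1$ (only layer 1 can produce such an edge, since the endpoints lie in different layer-2 blocks) gives the stated formula, and the derivation for $|\overline{S_2}|$ is completely symmetric. For the noise model \Grandom, the noise layer is a single community containing every node, so any pair automatically lies in the noise block: hence $S_{1N}$ is exactly the set of pairs sharing a layer-1 block, of expected count $\tfrac{1}{2}n(s_1-1)$ with edge probability $p_{1n}$, and $S_N$ is exactly the set of pairs in distinct layer-1 blocks, of expected count $\tfrac{1}{2}n(n-s_1)$ with edge probability $p_n$.

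Since the entire argument consists of counting pairs and applying independence, I do not anticipate a serious obstacle. The one place that warrants care is the passage from ``expected intersection size $r_{12}$'' to ``expected number of within-intersection pairs $\tfrac{1}{2}n(r_{12}-1)$'': this works cleanly by linearity of expectation applied per node, because we count ordered incidences (node, partner) before halving, so we never need to take an expectation of $\binom{|\text{block}|}{2}$. With that detail in place, each of the five identities is a single line.
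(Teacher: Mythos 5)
Your proposal is correct and rests on the same underlying mechanism as the paper's proof --- count the expected number of node pairs in each membership pattern, then multiply by the independent-layer generation probability (e.g.\ $p_{12}=1-(1-p_1)(1-p_2)$) --- but you organize the count per node (handshake counting: fix $v$, count its candidate partners, sum over the $n$ nodes, halve), whereas the paper enumerates per block: it observes there are $n_1 n_2$ intersection blocks of $r_{12}$ nodes each, giving $n_1 n_2 \cdot \frac{1}{2} r_{12}(r_{12}-1)$ pairs for $S_{12}$, and obtains $|\overline{S_{1}}|$ by subtracting those pairs from the $n_1 \cdot \frac{1}{2}s_{1}(s_{1}-1)$ layer-1 pairs; both bookkeeping styles collapse to the stated formulas via $n = n_1 n_2 r_{12}$, so neither buys extra generality. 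One caution about your closing remark: the claim that per-node linearity lets you avoid taking an expectation of $\binom{|B|}{2}$ overstates what the ordered-incidence trick achieves. The intersection block containing a fixed node $v$ is a size-biased draw, so the expected number of $v$'s within-intersection partners is $\frac{1}{n}E\bigl[\sum_{B} |B|(|B|-1)\bigr]$, which equals $r_{12}-1$ only when the block sizes have zero variance --- exactly the idealization the paper adopts when it asserts that ``every intersection block contains $r_{12}$ nodes,'' even though \cref{(definition)Multi-layer stochastic block model} only fixes intersection sizes in expectation. Your proof therefore matches the paper's level of rigor, but the trick relocates the second-moment issue rather than removing it.
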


\begin{proof}
In \Gstruct, every community in $l_1$ intersects with every community in $l_2$, so there are $n_1 \cdot n_2$ intersection blocks. Every intersection block contains $r_{12}$ nodes, so there are a total of $n_1 n_2
\cdot \frac{1}{2}r_{12}(r_{12}-1)$ node pairs in the intersection
with the edge generation probability $p_{12}$.
Therefore, $|\overline{S_{12}}| = n_1 n_2\cdot \frac{1}{2}r_{12}(r_{12}-1)p_{12} = \frac{1}{2} n(r_{12}-1)p_{12}$.

Each community in $l_1$ has $\frac{1}{2}s_{1}(s_{1}-1)$ node pairs, and thus,
$n_1$ such communities have $n_1 \cdot \frac{1}{2}s_{1}(s_{1}-1)$ node pairs in total. Excluding
$\frac{1}{2} n(r_{12}-1) $ node pairs that are also internal to $l_2$, there
are $n_1 \cdot \frac{1}{2}s_{1}(s_{1}-1) -  \frac{1}{2} n(r_{12}-1)$ node
pairs that are only internal to $l_1$, which implies
$|\overline{S_{1}}| = \frac{1}{2}n(s_{1}-r_{12})p_{1}$.
Similarly, $|\overline{S_{2}}| = \frac{1}{2}n(s_{2}-r_{12})p_{2}$.

In \Grandom, we can think of the noise layer as one block containing all $n$ nodes. Then, analogous to the calculation of $|\overline{S_2}$ and $|\overline{S_{12}}|$, we can compute that $|\overline{S_{N}}| = \frac{1}{2} n(n-s_{1})p_{n}$, and  $|\overline{S_{1N}}| = \frac{1}{2}n(s_{1}-1)p_{1n}$ where $p_{1n}=p_1+p_n-p_1p_n$.

\end{proof}

\begin{theorem}\label{(theorem)modularity of model(2) > modularity of model(1)}
For the ground truth $l_1$, its modularity in \Gstruct is less than its corresponding modularity in \Grandom, that is,  $\Mstruct_{l_1}< \Mrandom_{l_1}$.

\end{theorem}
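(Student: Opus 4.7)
The plan is to first apply equation~(\ref{(equation)jialu2020_in_why_use_hicode}), a consequence of Lemma~\ref{jialu2020}, which reduces the theorem to establishing the ratio inequality
\[
\frac{e_{11}^{\textsf{S}}}{e_{1out}^{\textsf{S}}} < \frac{e_{11}^{\textsf{R}}}{e_{1out}^{\textsf{R}}}.
\]
I then substitute the closed-form expressions from Lemma~\ref{(lemma)ell_and_elout_2_layer_model} and Lemma~\ref{(lemma)(why use hicode)The maximum number of edges in different set} into both sides, observe that the normalization factor $1/n_1$ from Lemma~\ref{(lemma)ell_and_elout_2_layer_model} cancels in each ratio, and use the calibration $p_n=\tfrac{s_2-1}{n-1}p_2$ (from equating the expected noise edge counts) to rewrite the right-hand side. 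After this bookkeeping, the claim becomes the elementary inequality
\[
\frac{(r_{12}-1)p_{12}+(s_1-r_{12})p_1}{s_2-r_{12}} < \frac{(s_1-1)(n-1)p_{1n}}{(n-s_1)(s_2-1)}.
\]

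The key algebraic move is the identity $p_{12}=p_1+(1-p_1)p_2$ together with its counterpart $p_{1n}=p_1+(1-p_1)p_n$. Substituting these splits each side into a ``$p_1$-piece'' and a ``$(1-p_1)$-piece'' with positive coefficients, so it suffices to compare the two pieces separately. Using the identities $n=n_1 s_1=n_2 s_2=n_1 n_2 r_{12}$, $s_2-r_{12}=r_{12}(n_1-1)$, and $n-s_1=s_1(n_1-1)$, the $p_1$-piece reduces to $s_1(s_2-1)<r_{12}(n-1)$, which rewrites as $r_{12}(n-n_2)<r_{12}(n-1)$ and thus holds because $n_2\geq 4>1$. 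The $(1-p_1)$-piece reduces to $(r_{12}-1)/r_{12}<(s_1-1)/s_1$, i.e.\ $s_1>r_{12}$, which holds because $s_1=n_2 r_{12}$ with $n_2\geq 4$. Adding the two strict inequalities with the nonnegative weights $p_1$ and $1-p_1$ yields the desired strict inequality; the boundary case $p_1=1$ still works because the $p_1$-piece is strict on its own.

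The hardest part is not conceptual but organizational: one has to keep track of per-community versus total edge counts, ensure that the $p_n=\tfrac{s_2-1}{n-1}p_2$ calibration is threaded through the right-hand side before any cancellation, and choose a decomposition that makes the two pieces line up coefficient by coefficient. Splitting the edge probabilities via $p_{12}=p_1+(1-p_1)p_2$ is the natural choice that exposes the right parallelism; once it is in place, both residual inequalities are one-line consequences of $n_1,n_2\geq 2$ and the basic identity $n=n_1 n_2 r_{12}$.
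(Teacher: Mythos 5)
Your proposal is correct and follows essentially the same route as the paper: reduce via Lemma~\ref{jialu2020} and Lemmas~\ref{(lemma)ell_and_elout_2_layer_model}--\ref{(lemma)(why use hicode)The maximum number of edges in different set} to the ratio inequality, split $p_{12}=p_1+(1-p_1)p_2$ and $p_{1n}=p_1+(1-p_1)p_n$, and compare the $p_1$- and $(1-p_1)$-pieces separately using $n=n_1n_2r_{12}$. The only (immaterial) difference is that you substitute the calibration $p_n=\tfrac{s_2-1}{n-1}p_2$ exactly and verify both pieces as exact inequalities, whereas the paper reaches its inequality~(\ref{Theorem 8: inequality}) and then uses the slightly lossier bound $p_n<\tfrac{p_2}{n_2}$ before comparing coefficients.
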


\begin{proof}
	By Lemma~\ref{(lemma)ell_and_elout_2_layer_model},
we can reduce the inequality (\ref{(equation)jialu2020_in_why_use_hicode}) to :
$\frac{|\overline{S_{1}}|+|\overline{S_{12}}|}{|\overline{S_{2}}|} < \frac{|\overline{S_{1N}}|}{|\overline{S_{N}}|}$.
By Lemma~\ref{(lemma)(why use hicode)The maximum number of edges in different set}, this inequality holds iff
\begin{align*}
& \frac{(s_{1}-r_{12})p_{1}+(r_{12}-1)p_{12}}{(s_{2}-r_{12})p_{2}}  < \frac{(s_{1}-1)p_{1n}}{(n-s_{1})p_{n}},
\end{align*}
which by simple algebraic manipulation could be further reduced to
\begin{align}
	\label{Theorem 8: inequality}
	\frac{p_1}{p_2} \cdot (s_1 -1) + (r_{12} - 1) \cdot (1-p_1)
	< \frac{1}{n_2} \left( \frac{p_1}{p_n} \cdot (s_1 -1) + (\frac{n}{n_1} -1) \cdot (1 - p_1)\right).
\end{align}
Recall that we set $p_n = \frac{s_2 -1}{n-1} \cdot p_2$, so $p_n < \frac{p_2}{n_2}$, and thus the right hand side is greater than
$ \frac{p_1}{p_2} \cdot (s_1 -1) + \frac{1}{n_2} (\frac{n}{n_1} -1) \cdot (1 - p_1)$.
Since $n_2 > 1$ and $r_{12} = \frac{n}{n_1 \cdot n_2}$,
~\cref{Theorem 8: inequality} always holds.

\end{proof}

\subsection{Structured Noise Cause Two Peaks}\label{(sec)Struct_Cause_P}
We then show that the modularity curve surface of \Gstruct has at least two peaks, indicating that the modularity optimizing community detection algorithms may fall into a local optimum. 
We demonstrate that for any partition very close to $l_2$ ($l_1$), its modularity is less than the modularity of $l_2$ ($l_1$).  We define ``close partitions'' as follows:


\begin{definition} (Close Partitions.)
	Partitions $l'$ is close to partition $l$ if $l'$ can be obtained from $l$ through one of the following updates:

1) move one node from a community to another community;
2) exchange two nodes' community membership;
3) separate one node from a community to form a new community. 
\end{definition}
In the following, we show that both layers in \Gstruct can give rise to modularity peaks
as long as the layer has enough structure.
For simplicity, we focus on $l_2$; the proof for $l_1$ is analogous.

\begin{theorem}\label{(the)two_p}
In \Gstruct, if the ground truth partition $l_2$ has three times more internal edges than outgoing edges, that is,  $3e_{22}>e_{2out}$,
then the modularity of $l_2$ is higher than the modularity of any
$l'_2$ close to $l_2$, i.e.,  $Q^{\textsf{S}}_{l_2}>Q^{\textsf{S}}_{l'_2}$.
\end{theorem}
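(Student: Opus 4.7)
The plan is to show that for each of the three admissible updates the expected modularity change $\Delta Q := Q^{\textsf{S}}_{l'_2} - Q^{\textsf{S}}_{l_2}$ is strictly negative under $3 e_{22} > e_{2out}$, so that $l_2$ is a strict local maximum of $Q^{\textsf{S}}$ over the close-partition neighborhood. The first step is to assemble the handful of expected edge counts that drive every subsequent calculation. Using the intersection-block bookkeeping from the proof of Lemma~\ref{(lemma)(why use hicode)The maximum number of edges in different set}, for a node $v$ in an $l_2$-community $C_a$ of \Gstruct I would record the expected number of $v$-to-$C_a \setminus \{v\}$ edges
\[
A \;:=\; (r_{12}-1) p_{12} + (s_2 - r_{12}) p_2 \;=\; \frac{2 e_{22}}{s_2},
\]
the expected $v$-to-$C_b$ edges (for any other $l_2$-community $C_b$), which can only come via a shared $l_1$-community,
\[
B \;:=\; r_{12}\, p_1 \;=\; \frac{1}{n_2 - 1}\cdot\frac{e_{2out}}{s_2},
\]
the expected node degree $d_v = A + (s_1 - r_{12}) p_1 = (2 e_{22} + e_{2out})/s_2$, the expected community degree $d_{C_a} = s_2 d_v$, and the total edge count $e = n d_v /2$, which yields the convenient identity $d_v^2 / (2 e) = d_v / n$.

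For the single-node move of $v$ from $C_a$ to $C_b$, the Clauset-Newman local-modularity formula combined with the symmetry $d_{C_a} = d_{C_b}$ gives $\Delta Q = (B - A)/e - d_v^2/(2 e^2)$ in expectation; the condition $\Delta Q < 0$ reduces to $B - A < d_v/n$, which is implied by $A > B$. For an exchange of $v \in C_a$ with $w \in C_b$, the community degrees are preserved in expectation and the degree term vanishes, leaving $\Delta Q = 2(B - A)/e$, so again $A > B$ suffices. For the singleton split $v \to \{v\}$, direct calculation gives $\Delta Q = -A/e + d_v (d_{C_a} - d_v)/(2 e^2)$, and substituting $d_{C_a} = s_2 d_v$ turns $\Delta Q < 0$ into $\frac{2 e_{22}}{e_{2out}} > \frac{s_2 - 1}{n - s_2 + 1}$.

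It then remains to show that both reductions, $A > B$ (equivalently $e_{2out} < 2(n_2 - 1) e_{22}$) and $\frac{2 e_{22}}{e_{2out}} > \frac{s_2 - 1}{n - s_2 + 1}$, follow from $3 e_{22} > e_{2out}$. Both right-hand ratios are bounded by $\tfrac13$: the first because the standing assumption $n_l \ge 4$ gives $\tfrac{1}{n_2 - 1} \le \tfrac13$, and the second because $s_2 = n/n_2 \le n/4$ forces $\tfrac{s_2 - 1}{n - s_2 + 1} < \tfrac13$. Since the hypothesis gives $\frac{2 e_{22}}{e_{2out}} > \tfrac{2}{3}$, both inequalities hold with slack.

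The main obstacle I expect is the expectation-versus-realization subtlety: the computations above are natural in expectation, but the theorem is phrased in terms of the realized graph-level quantities $e_{22}$ and $e_{2out}$. Because modularity is linear in internal-edge counts and quadratic in community degrees, and because in the SBM the community degrees concentrate sharply around $s_2 d_v$, the bookkeeping should go through once one replaces $A, B, d_v$ by their expected values driven by the observed $e_{22}, e_{2out}$. A secondary but purely mechanical nuisance arises in the exchange step from the small asymmetry between $E(v, C_b)$ and $E(w, C_a \setminus \{v\})$ (depending on whether $v$ happens to be an $l_1$-neighbor of $w$), but this correction is $O(1/s_2)$ and only tightens the inequality, so it does not affect the argument.
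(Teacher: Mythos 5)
Your proposal is correct and follows essentially the same route as the paper's proof: a case analysis over the same three close-partition updates, computing the expected modularity change from per-node expected edge counts, and reducing each case to an inequality that follows from $3e_{22}>e_{2out}$ together with $n_2\geq 4$ (hence $s_2\leq n/4$), just as the paper concludes via its bound $Q^{\textsf{S}}_{l'_2}<Q^{\textsf{S}}_{l_2}-\frac{c}{ne}\bigl[(n_2-1)e_{22}-e_{2out}\bigr]$. If anything, your bookkeeping is tidier: your $A=2e_{22}/s_2$ is the handshake-correct expected internal degree of $v$ (the paper's first case charges only $e_{22}/s_2$ to the internal-edge loss while using $2e_{22}/s_2+e_{2out}/s_2$ in the degree change), your compact formulas $\Delta Q=(B-A)/e-d_v^2/(2e^2)$, $\Delta Q=2(B-A)/e$, and $\Delta Q=-A/e+d_v(d_{C_a}-d_v)/(2e^2)$ exploit the symmetry $d_{C_a}=d_{C_b}$ and the identity $e=nd_v/2$ to shortcut the paper's term-by-term expansion, and you explicitly flag the expectation-versus-realization issue and the $(v,w)$-edge correction in the swap case, both of which the paper passes over silently.
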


\begin{proof}
Intuitively, the restriction $3e_{22}>e_{2out}$ makes sure that communities in $l_2$
have significantly stronger internal connections.
We will only show the proof for partitions obtained through the first case because the other two cases can be proved similarly. See appendix for proofs for the other two cases.

For the first case, without loss of generality,
we assume that node $v$ in community 1 is moved to community 2.
Before the update, there expect to be $\frac{e_{22}}{s_{2}}$ internal edges connected with node $v$. Also, there expect to be $\frac{e_{2out}}{(n_{2}-1)s_{2}}$ outgoing edges from node $v$ to community 2, and $\frac{(n_{2}-2)e_{2out}}{(n_{2}-1)s_{2}}$ outgoing edges from $v$ to communities other than community 2.

After moving, the internal edges connected to node $v$ become the outgoing edges of both community 1 and 2. The outgoing edges from node $v$ to community 2 and other communities respectively become the internal and outgoing edges of community 2.

Therefore, in the new partition, for community 1, $e'_{22}=e_{22}-\frac{e_{22}}{s_{2}}$ and $e'_{2out}=e_{2out} - \frac{e_{2out}}{s_{2}}$. For community 2, $e'_{22}=e_{22}+\frac{e_{2out}}{(n_{2}-1)s_{2}}$ and $e'_{2out}=e_{2out} + \frac{e_{22}}{s_{2}}+ \frac{(n_2-2)e_{2out}}{(n_2-1)s_{2}}$. The modularity of the new partitions:

\begin{align*}
Q^{\textsf{S}}_{l'_2}=&\sum_{i=1}^{n_{2}} \frac{e_{22}^{i'}}{e} - \left(\frac{d_{2}^{i'}}{2e} \right)^2 \\
=& Q^{\textsf{S}}_{l_2} - \frac{e_{22}}{e\cdot s_{2}} + \frac{e_{2out}}{e \cdot s_{2}(n_{2}-1)}- \frac{\left[ d_{2}^1-\left( \frac{2e_{22}}{s_{2}}+\frac{e_{2out}}{s_{2}} \right) \right]^2}{4e^2}+ \frac{d_{2}^2}{4e^2} \\
&- \frac{\left[ d_{2}^2 + \left( \frac{n_{2}e_{2out}}{(n_2-1)s_{2}}+\frac{e_{22}}{s_{2}} \right) \right]^2}{4e^2}  + \frac{d_{2}^2}{4e^2}\\
<&  Q^{\textsf{S}}_{l_2} - \left[ \frac{(n_2-1)e_{22}-e_{2out}}{e\cdot s_{2}(n_2-1)} - \frac{2d_{2}^1\left(\frac{2e_{22}}{s_{2}}+\frac{e_{2out}}{s_{2}} \right)}{4e^2} + \frac{2d_{2}^2\left(\frac{e_{22}}{s_{2}}+\frac{n_{2}e_{2out}}{(n_{2}-1)s_{2}} \right)}{4e^2}  \right] \\
=& Q^{\textsf{S}}_{l_2} - \left[ \frac{(n_2-1)e_{22}-e_{2out}}{e\cdot s_{2}(n_2-1)} - \frac{2e_{22}+e_{2out}}{e\cdot s_{2}n_{2}}+\frac{e_{22}+\frac{n_{2}}{n_{2}-1}e_{2out}}{e \cdot s_{2}n_{2}} \right]\\
= & Q^{\textsf{S}}_{l_2} -  \frac{1}{e\cdot s_{2}n_{2}(n_{2}-1)}[n_{2}(n_{2}-1)e_{22}-(n_{2}-1)e_{22}-(n_{2}-1)e_{2out}] \\
= &  Q^{\textsf{S}}_{l_2} - \frac{1}{ne}[(n_{2}-1)e_{22}-e_{2out}].
\end{align*}
Because $ 3e_{22}>e_{2out} $ and $ n_{i}>4 $, $(n_{2}-1)e_{22}-e_{2out}>0$ and $Q^{\textsf{S}}_{l'_2}< Q^{\textsf{S}}_{l'_2}$ holds.

\end{proof}
Thus, there are at least two peaks for the modularities of \Gstruct, which obstructs the detection of the dominant structure.

\section{Theoretical Analysis on Three-layer Stochastic Block Model}\label{section(Theoretical analysis on three-layer Stochastic Block Model)}
Next, we give some evidence on why HICODE works when there are more than two layers. Because the proof for the general multi-layer case uses complicated notations to bookkeep the layers, we first walk through the three-layer case in this section to show the gist of our proof. In particular, we prove that, after performing RemoveEdge / ReduceEdge / ReduceWeight on the three-layer stochastic block model network, the modularity of the target layer always increases, making it easier for the base algorithm to uncover the target layers.

We first introduce some preliminary 
lemmas.

\begin{lemma}\label{ell_and_elout}
In the three-layer stochastic block model $G(n, n_1, n_2,$ $ n_3, p_1, p_2, p_3 )$, the number of internal edges and outgoing edges of one community in layer 1, layer 2 and layer 3 are:
\begin{align*}
e_{11}=\frac{1}{n_1}(|\overline{S_1}|+|\overline{S_{12}}|+|\overline{S_{13}}|+|\overline{S_{123}}|), \qquad e_{1out}=\frac{2}{n_1}(|\overline{S_{2}}|+|\overline{S_{3}}|+|\overline{S_{23}}|),\\
e_{22}=\frac{1}{n_2}(|\overline{S_2}|+|\overline{S_{12}}|+|\overline{S_{23}}|+|\overline{S_{123}}|), \qquad
e_{2out}=\frac{2}{n_2}(|\overline{S_{1}}|+|\overline{S_{3}}|+|\overline{S_{13}}|),\\
e_{33}=\frac{1}{n_3}(|\overline{S_3}|+|\overline{S_{13}}|+|\overline{S_{23}}|+|\overline{S_{123}}|), \qquad
e_{3out}=\frac{2}{n_3}(|\overline{S_{1}}|+|\overline{S_{2}}|+|\overline{S_{12}}|).\\
\end{align*}
\end{lemma}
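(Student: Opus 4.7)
The plan is to prove this by a direct expectation-counting argument, using the fact that each edge in the graph falls into exactly one of the sets $S_1, S_2, S_3, S_{12}, S_{13}, S_{23}, S_{123}$ depending on which combinations of layers contain both its endpoints in a common block. The statement follows from a symmetry argument plus careful accounting of which edge sets contribute to internal vs.\ outgoing edges of a given layer.

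First I would fix a layer, say layer 1, and ask: for which edge sets $S_T$ (where $T \subseteq \{1,2,3\}$) is an edge of that set internal to some community of layer 1? By the definition of the edge sets, an edge in $S_T$ has both endpoints in the same block of every layer in $T$, so it is internal to a layer-1 community exactly when $1 \in T$. This immediately identifies the ``internal'' contribution as $S_1, S_{12}, S_{13}, S_{123}$, and the ``outgoing'' contribution as $S_2, S_3, S_{23}$.

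Next I would use a double-counting / single-counting dichotomy. Summing $e_{11}^i$ over all $n_1$ layer-1 communities gives the total number of edges that are internal to some community of layer 1, counted once each (an internal edge sits in exactly one community). Summing $e_{1out}^i$ over all communities counts every cross-community edge exactly twice (once from each endpoint's community). So
\begin{align*}
\sum_{i=1}^{n_1} e_{11}^i &= |\overline{S_1}|+|\overline{S_{12}}|+|\overline{S_{13}}|+|\overline{S_{123}}|, \\
\sum_{i=1}^{n_1} e_{1out}^i &= 2\bigl(|\overline{S_2}|+|\overline{S_3}|+|\overline{S_{23}}|\bigr).
\end{align*}
The symmetry of the multi-layer SBM (all communities in a given layer have equal expected size, equal intersection patterns with other layers, and independent edge generation) ensures that the expected values $e_{11}^i$ and $e_{1out}^i$ do not depend on $i$, so dividing by $n_1$ gives the stated per-community formulas. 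The arguments for layers 2 and 3 are identical after relabeling.

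The only ``hard'' step is making sure the symmetry assumption is cleanly invoked: I would point back to~\cref{(definition)Multi-layer stochastic block model} to note that all communities of layer $l$ have expected size $s_l$ and that intersections with communities of any other layer have identical expected size $r_{l l'}$, so by linearity of expectation every community contributes the same expected count to each edge set; no further probabilistic machinery is needed.
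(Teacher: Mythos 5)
Your proposal is correct and matches the paper's intended argument: the paper omits the proof of this lemma, stating it is analogous to Lemma~\ref{(lemma)ell_and_elout_2_layer_model}, whose one-line proof is precisely your counting dichotomy — each internal edge belongs to exactly one community of the layer while each outgoing edge is counted by two communities. You simply make explicit the two steps the paper leaves implicit (classifying $S_T$ as internal to layer $l$ iff $l \in T$, and invoking the model's symmetry to divide the layer totals by $n_l$), which is a faithful elaboration rather than a different route.
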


\begin{lemma}\label{The maximum number of edges in different sets}
The expected number of edges in different edge sets on model $G(n, n_1, n_2,$ $ n_3, p_1, p_2, p_3 )$ are:
\begin{align*}
|\overline{S_{123}}| &= \frac{1}{2}n(r_{123}-1)p_{123},&
|\overline{S_{12}}| &= \frac{1}{2}n(n_3-1)r_{123}p_{12},\\
|\overline{S_{13}}| &= \frac{1}{2}n(n_2-1)r_{123}p_{13},&
|\overline{S_{23}}| &= \frac{1}{2}n(n_1-1)r_{123}p_{23},\\
|\overline{S_{1}}| &= \frac{1}{2}n(n_2-1)(n_3-1)r_{123}p_{1},&
|\overline{S_{2}}| &= \frac{1}{2}n(n_1-1)(n_3-1)r_{123}p_{2},\\
|\overline{S_{3}}| &= \frac{1}{2}n(n_1-1)(n_2-1)r_{123}p_{3}.
\end{align*}
\end{lemma}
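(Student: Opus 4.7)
The plan is to count the expected number of edges in each set by first computing the expected number of node pairs lying in exactly the specified combination of intersection blocks, and then multiplying by the edge generation probability $p_T$. Three facts drive all seven calculations: (i) a pair inside the intersection of blocks from layers in $T$ forms an edge with probability $p_T = 1 - \prod_{l\in T}(1-p_l)$, by independence of layers; (ii) the identity $n_{l_1}\cdots n_{l_k}\cdot r_{l_1\ldots l_k} = n$, which follows from the definition of the expected intersection size; and (iii) inclusion--exclusion, used to strip out pairs lying in larger intersections.

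First I would compute $|\overline{S_{123}}|$ directly, since no correction is needed: there are $n_1 n_2 n_3$ triple-intersection blocks in expectation, each contributing $\binom{r_{123}}{2}$ pairs, which collapses via $n_1 n_2 n_3 \, r_{123} = n$ to $\tfrac{1}{2}n(r_{123}-1)$ pairs; multiplying by $p_{123}$ gives the claim.

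Next, for a two-layer set such as $S_{12}$, I would subtract from the $n_1 n_2\binom{r_{12}}{2}$ pairs internal to both layers $1$ and $2$ the $n_1 n_2 n_3\binom{r_{123}}{2}$ pairs that are also internal to layer $3$ (which sit instead in $S_{123}$). Using $r_{12}=n_3\, r_{123}$ together with identity (ii), this difference simplifies to $\tfrac{1}{2}n(n_3-1)r_{123}$, and by symmetry the same template gives $|\overline{S_{13}}|$ and $|\overline{S_{23}}|$.

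Finally, for a single-layer set such as $S_1$, I would apply three-term inclusion--exclusion across the extensions into layer $2$, layer $3$, and both: the count of pairs lying only in layer $1$ becomes $\tfrac{1}{2}n\bigl[(s_1-1) - (r_{12}-1) - (r_{13}-1) + (r_{123}-1)\bigr] = \tfrac{1}{2}n\bigl[s_1 - r_{12} - r_{13} + r_{123}\bigr]$. Substituting $s_1=n_2 n_3\, r_{123}$, $r_{12}=n_3\, r_{123}$, $r_{13}=n_2\, r_{123}$ turns the bracket into $r_{123}(n_2 n_3 - n_3 - n_2 + 1) = (n_2-1)(n_3-1)r_{123}$, so that multiplying by $p_1$ yields the stated formula; the cases for $S_2, S_3$ follow by permuting indices. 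The main obstacle is this last factorization step: the alternating inclusion--exclusion sum must be recognized as a product of $(n_j-1)$ factors, and the substitutions converting $s_1, r_{12}, r_{13}$ into multiples of $r_{123}$ are what force the collapse; without them the intermediate expressions look deceptively unrelated to the target formulas.
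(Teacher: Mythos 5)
Your proof is correct and follows essentially the same approach the paper intends: the paper omits this proof, stating it is analogous to the two-layer case (Lemma~\ref{(lemma)(why use hicode)The maximum number of edges in different set}), which uses exactly your strategy of counting node pairs per intersection block via the identity $n_{l_1}\cdots n_{l_k}\, r_{l_1\ldots l_k}=n$ and subtracting pairs lying in larger intersections. Your inclusion--exclusion for the single-layer sets and the factorization $s_1 - r_{12} - r_{13} + r_{123} = (n_2-1)(n_3-1)r_{123}$ are the natural three-layer extension of that template, and all seven computations check out.
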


The proof of Lemma \ref{ell_and_elout} and \ref{The maximum number of edges in different sets} are similar to Lemma  \ref{(lemma)(why use hicode)The maximum number of edges in different  set} and \ref{(lemma)ell_and_elout_2_layer_model}, so we omit them here.

Now we return to the main points of this section. HICODE is an iterative process, and at each iteration, it takes a layer as the target layer and reduces other layers iteratively. Our intuition is that if we can reduce one layer, all other layers will be more visible in the network. Then, HICODE will locate each layer more accurately as the iterative process continues and eventually approximate the ground truth layers very closely. Concretely, we prove that:


\begin{theorem}
For a three-layer stochastic block model network $G(n,n_1,n_2,n_3,p_1,p_2,p_3)$, the modularity of a layer increases if we apply \textcolor{blue}{RemoveEdge} on all communities in other layers.
\end{theorem}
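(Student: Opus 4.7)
The plan is to invoke Lemma~\ref{jialu2020}: it suffices to show that after the weakening, the surviving fraction of outgoing edges of the target layer is strictly less than the surviving fraction of internal edges. Without loss of generality I would take layer $1$ as the target and apply RemoveEdge to layers $2$ and $3$ (the arguments for layers $2$ and $3$ as targets are symmetric, just permuting indices).

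\textbf{Step 1: decompose the edges.} I would use the edge sets $S_1, S_2, S_3, S_{12}, S_{13}, S_{23}, S_{123}$ from Definition~\ref{Set of different edges in the graph}. By Lemma~\ref{ell_and_elout} the expected counts for layer~$1$ are
\begin{equation*}
e_{11} = \tfrac{1}{n_1}\bigl(|\overline{S_1}|+|\overline{S_{12}}|+|\overline{S_{13}}|+|\overline{S_{123}}|\bigr), \qquad
e_{1out} = \tfrac{2}{n_1}\bigl(|\overline{S_{2}}|+|\overline{S_{3}}|+|\overline{S_{23}}|\bigr).
\end{equation*}
Note that in the idealized (noise-free) multi-layer SBM, every node pair that carries an edge lies in the internal block of at least one layer, so each edge of the graph sits in exactly one of the seven sets above.

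\textbf{Step 2: track which edges RemoveEdge deletes.} RemoveEdge on layer $2$ deletes every internal edge of every layer-$2$ community, i.e.\ everything in $S_2\cup S_{12}\cup S_{23}\cup S_{123}$; RemoveEdge on layer $3$ likewise deletes everything in $S_3\cup S_{13}\cup S_{23}\cup S_{123}$. Taking the union, the only edges that survive the two weakening operations are those in $S_1$. Hence, writing $e'_{11}$ and $e'_{1out}$ for the post-weakening expectations,
\begin{equation*}
e'_{11} = \tfrac{1}{n_1}|\overline{S_1}|, \qquad e'_{1out} = 0.
\end{equation*}

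\textbf{Step 3: apply Lemma~\ref{jialu2020}.} Since $p_1\ge 0.05>0$ and $|\overline{S_1}|=\tfrac12 n(n_2-1)(n_3-1)r_{123}p_1>0$ by Lemma~\ref{The maximum number of edges in different sets}, the ratio $e'_{11}/e_{11}$ is strictly positive, while $e'_{1out}/e_{1out}=0$. Thus
\begin{equation*}
\frac{e'_{1out}}{e_{1out}} \;=\; 0 \;<\; \frac{e'_{11}}{e_{11}},
\end{equation*}
and Lemma~\ref{jialu2020} immediately gives $Q'_1>Q_1$. The same decomposition with indices permuted handles layers $2$ and $3$.

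\textbf{Main obstacle.} The proof is essentially bookkeeping once one sees the following simple structural fact: in a pure multi-layer SBM, outgoing edges of layer~$1$ are precisely those that are internal to some other layer, so RemoveEdge on \emph{all} other layers wipes out \emph{every} outgoing edge of layer~$1$ while only partially thinning its internal edges. The only place subtlety could enter is if a background noise layer were included (as Definition~\ref{(definition)Multi-layer stochastic block model} allows via $n_l=1$), because noise edges are not removed by weakening any ground-truth layer; in that case the argument would need to be refined by comparing the residual noise contribution to layer $1$'s internal edges rather than relying on $e'_{1out}=0$. For the three-layer SBM stated in the theorem, however, this issue does not arise and Lemma~\ref{jialu2020} closes the argument directly.
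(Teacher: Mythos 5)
Your proof is correct and takes essentially the same approach as the paper's: both decompose the edges into the seven intersection sets of Lemma~\ref{ell_and_elout}, observe that RemoveEdge on the two non-target layers annihilates every set except the one purely internal to the target layer (so the surviving fraction of outgoing edges is $0$ while the surviving fraction of internal edges is strictly positive), and conclude via Lemma~\ref{jialu2020}. The only cosmetic difference is that the paper takes layer $2$ as the target rather than layer $1$, and your closing caveat about a noise layer correctly identifies an issue that the theorem's hypotheses exclude.
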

\begin{proof}
We consider one case of taking layer 2 as the target layer and applying the reducing method on the other two layers. Taking layer 1 or layer 3 as the target layer can be proved similarly. 

After we apply RemoveEdge, all of the edges generated by layer 1 and layer 3 are removed. Therefore, the values of $|\overline{S_1}|$, $|\overline{S_3}|$, $|\overline{S_{12}}|$, $|\overline{S_{13}}|$, $|\overline{S_{23}}|$ and $|\overline{S_{123}}|$ 
become to 0, and $|\overline{S_2}|$ is unaffected.
Therefore, we have:
\begin{align*}
	\frac{e'_{22}}{e_{22}} &= \frac{(|\overline{S_2}|'+|\overline{S_{12}}|'+|\overline{S_{23}}|'+|\overline{S_{123}}|')}{(|\overline{S_2}|+|\overline{S_{12}}|+|\overline{S_{23}}|+|\overline{S_{123}}|)} \\
																								&>\frac{(|\overline{S_1}|'+|\overline{S_3}|'+|\overline{S_{13}}|')}{(|\overline{S_1}|+|\overline{S_3}|+|\overline{S_{13}}|)}
																								= \frac{e'_{2out}}{e_{2out}}
\end{align*}
By Lemma~\ref{jialu2020}, we could complete the proof.
\end{proof}

\begin{theorem}\label{(3layer)ReduceEdge on other layer}
For a three-layer stochastic block model network $G(n,n_1,n_2,n_3,p_1,p_2,p_3)$, the modularity of a layer increases if we apply \textcolor{blue}{ReduceEdge} on all communities in other layers.
\end{theorem}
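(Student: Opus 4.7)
The plan is to mirror the RemoveEdge proof: invoke Lemma~\ref{jialu2020}, for which it suffices to show that a larger fraction of layer-2's outgoing edges than of its internal edges are removed in expectation. Take layer 2 as the target (the other two cases are symmetric). Let $q_1,q_3\in(0,1)$ denote the retention probabilities that ReduceEdge uses on layers 1 and 3 respectively (uniform across communities of a layer by the SBM symmetry, and strictly below $1$ since within-community density exceeds background density). Because the two ReduceEdge operations act as independent coin-flips on each edge, and an edge in $S_I$ (for $I\subseteq\{1,2,3\}$) is internal to layer $l$ iff $l\in I$, linearity of expectation yields the post-reduction sizes
\begin{align*}
|\overline{S_2}|' &= |\overline{S_2}|, & |\overline{S_1}|' &= q_1|\overline{S_1}|, & |\overline{S_3}|' &= q_3|\overline{S_3}|, \\
|\overline{S_{12}}|' &= q_1|\overline{S_{12}}|, & |\overline{S_{23}}|' &= q_3|\overline{S_{23}}|, & |\overline{S_{13}}|' &= q_1q_3|\overline{S_{13}}|,
\end{align*}
and $|\overline{S_{123}}|'=q_1q_3|\overline{S_{123}}|$.

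Substituting these into the formulas for $e_{22}$ and $e_{2out}$ from Lemma~\ref{ell_and_elout}, the condition of Lemma~\ref{jialu2020} becomes
\begin{align*}
\frac{q_1|\overline{S_1}|+q_3|\overline{S_3}|+q_1q_3|\overline{S_{13}}|}{|\overline{S_1}|+|\overline{S_3}|+|\overline{S_{13}}|} < \frac{|\overline{S_2}|+q_1|\overline{S_{12}}|+q_3|\overline{S_{23}}|+q_1q_3|\overline{S_{123}}|}{|\overline{S_2}|+|\overline{S_{12}}|+|\overline{S_{23}}|+|\overline{S_{123}}|}.
\end{align*}
The right-hand side is a convex combination of $\{1,q_1,q_3,q_1q_3\}$ that places strictly positive weight on $1$ (from the unreduced set $S_2$), whereas the left-hand side is a convex combination of $\{q_1,q_3,q_1q_3\}$, all strictly below $1$. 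This already suggests the inequality, since layer 2's internal edges include an untouched component while its outgoing edges are uniformly shrunk.

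The main obstacle is that these two convex combinations use \emph{different} weight vectors over the shrinkage factors, so the comparison is not immediate from the presence of the factor $1$ alone. My plan to finesse this is to prove the theorem by composing two single-layer reductions. First, show that applying ReduceEdge on layer 1 alone already satisfies the required ratio inequality for layer 2; after cross-multiplying and cancelling common terms, this reduces to the purely combinatorial inequality
\begin{align*}
|\overline{S_3}|\cdot(|\overline{S_{12}}|+|\overline{S_{123}}|) < (|\overline{S_1}|+|\overline{S_{13}}|)\cdot(|\overline{S_2}|+|\overline{S_{23}}|),
\end{align*}
which one can verify by plugging in the closed forms of Lemma~\ref{The maximum number of edges in different sets} and using $n_l\ge 4$, $p_l\in[0.05,1]$, and $r_{123}\ge 2$ to bound the factors. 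Then apply the same single-layer argument on the graph obtained after the layer-1 reduction, this time treating layer 3 as the layer being reduced; since the decomposition of that modified graph into the same seven edge sets is unchanged up to known scalings, the single-layer inequality remains valid and gives a further increase in $Q_2$. Composing the two strict increases yields the claim, and the analogous argument with the roles of the layers permuted covers the cases where layer 1 or layer 3 is the target.
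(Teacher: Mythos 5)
Your setup coincides with the paper's own first half: the post-reduction expected set sizes ($S_2$ untouched; $S_1,S_{12}$ scaled by $q_1$; $S_3,S_{23}$ by $q_3$; $S_{13},S_{123}$ by $q_1q_3$), the appeal to Lemma~\ref{jialu2020} via Lemma~\ref{ell_and_elout}, and the resulting ratio inequality are exactly what the paper writes down. The paper then finishes directly: it cross-multiplies the simultaneous two-layer inequality, groups terms by the factors $(1-q_1)$, $(1-q_3)$, $(1-q_1q_3)$, substitutes the closed forms of Lemma~\ref{The maximum number of edges in different sets}, and closes using $n_i\ge 4$ \emph{and the extra restriction} $q_i\le 0.75$. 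Your compositional finish is genuinely different, and your first step is sound: reducing layer 1 alone does reduce, after cross-multiplication and cancellation, to $|\overline{S_3}|\,(|\overline{S_{12}}|+|\overline{S_{123}}|) < (|\overline{S_1}|+|\overline{S_{13}}|)\,(|\overline{S_2}|+|\overline{S_{23}}|)$, which after substituting Lemma~\ref{The maximum number of edges in different sets} follows from $(n_3-1)(p_1p_{23}+p_2p_{13})\ge (n_3-1)p_3p_{12}$ (since $p_{23}\ge p_3$, $p_{13}\ge p_3$, $p_{12}\le p_1+p_2$) together with the identity $p_{13}p_{23}-p_3p_{123}=(1-p_3)p_1p_2\ge 0$, leaving the strict surplus $(n_3-1)^2p_1p_2>0$; notably no bound on $q_1$ is needed.

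The genuine gap is your second step. After the layer-1 reduction the graph is \emph{not} a three-layer stochastic block model of the same family: the edge probability on $S_{12}$ becomes $q_1p_{12}=q_1(p_1+p_2-p_1p_2)$, which is not of the form $p_1'+p_2-p_1'p_2$ for any $p_1'$, so the intersection probabilities no longer factorize, and your justification that ``the decomposition \dots is unchanged up to known scalings, [so] the single-layer inequality remains valid'' does not hold up — the seven sets are scaled by \emph{different} factors ($S_2,S_3,S_{23}$ by $1$; $S_1,S_{12},S_{13},S_{123}$ by $q_1$), so the second step requires the genuinely new inequality
\begin{align*}
q_1|\overline{S_1}|\bigl(|\overline{S_{23}}|+q_1|\overline{S_{123}}|\bigr) < \bigl(|\overline{S_3}|+q_1|\overline{S_{13}}|\bigr)\bigl(|\overline{S_2}|+q_1|\overline{S_{12}}|\bigr),
\end{align*}
which cannot be cited from step one. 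The gap is reparable: in probability terms the difference of the two sides is a quadratic in $q_1$ whose constant term is $(n_1-1)^2p_2p_3>0$, whose linear coefficient $(n_1-1)(p_3p_{12}+p_2p_{13}-p_1p_{23})$ is nonnegative because $p_{12},p_{13}\ge p_1$ and $p_{23}\le p_2+p_3$, and whose quadratic coefficient $p_{12}p_{13}-p_1\frac{r_{123}-1}{r_{123}}p_{123}$ is nonnegative by the identity $p_{12}p_{13}-p_1p_{123}=(1-p_1)p_2p_3\ge 0$; hence it holds for every $q_1\in[0,1]$. With that repair, your route is correct, and in fact yields a \emph{stronger} conclusion than the paper's proof: since sequential independent thinnings produce the same expected set sizes as the simultaneous reduction, the composed argument dispenses with the paper's restriction $q_i\le 0.75$ and works for all retention probabilities in $[0,1)$.
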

\begin{proof}
We still consider one case of taking layer 2 as the target layer. 
ReduceEdge respectively removes the edges in layer 1 and layer 3 with probability $1-q_{1}$ and $1-q_{3}$, where $q_i$ is defined in Eq.~\ref{defn_qi}. The edges in different sets are affected differently. When reducing layer 1, $S_{1}$, $S_{12}$, $S_{13}$ and $S_{123}$ keep edges with ratio $q_{1}$. When reducing layer 3, $S_{3}$, $S_{13}$, $S_{23}$ and $S_{123}$ keep edges with ratio $q_{3}$. 

Thus, the corresponding expectations are $|\overline{S_{3}}|q_{3}$, $|\overline{S_{23}}|q_{3}$, $|\overline{S_{13}}|q_{1}q_{3}$ and $|\overline{S_{123}}|q_{1}q_{3}$ after reducing layer 1 and layer 3. Hence we have:
\begin{align*}
&\frac{e'_{22}}{e_{22}} >\frac{e'_{2 out}}{e_{2out}}\\
\iff &\frac{|\overline{S_{2}}|+|\overline{S_{12}}|q_{1}+|\overline{S_{23}}|q_{3}+|\overline{S_{123}}|q_{1}q_{3}}{|\overline{S_{2}}|+|\overline{S_{12}}|+|\overline{S_{23}}|+|\overline{S_{123}}|} > \frac{|\overline{S_{1}}|q_{1}+|\overline{S_{3}}|q_{3}+|\overline{S_{13}}|q_{1}q_{3}}{|\overline{S_{1}}|+|\overline{S_{3}}|+|\overline{S_{13}}|}\\
\iff & (1-q_{1})\cdot [|\overline{S_{1}}||\overline{S_{2}}|+|\overline{S_{1}}||\overline{S_{23}}|-|\overline{S_{3}}||\overline{S_{12}}|+(|\overline{S_{23}}||\overline{S_{13}}|-|\overline{S_{3}}||\overline{S_{123}}|)q_{3}]\\
+ & (1-q_{3}) \cdot [|\overline{S_{2}}||\overline{S_{3}}|+|\overline{S_{3}}||\overline{S_{12}}|-|\overline{S_{1}}||\overline{S_{23}}|+(|\overline{S_{12}}||\overline{S_{13}}|-|\overline{S_{1}}||\overline{S_{123}}|)q_{1}]\\
+ &(1-q_{1}q_{3})|\overline{S_{2}}||\overline{S_{13}}| > 0. 
\end{align*}
If we apply Lemma \ref{ell_and_elout} and Lemma \ref{The maximum number of edges in different sets} and eliminate $\frac{1}{4}n^2(n_{1}-1)(n_{2}-1)(n_{3}-1)r_{123}^2$, we can simplify the inequality into:
\begin{align*}
&(1-q_{1}) \cdot [(n_{3}-1)p_{1}p_{2}+p_{1}p_{23}-p_{3}p_{12}+(p_{23}p_{13}-p_{3}p_{123})] \\
+& (1-q_{3}) \cdot [(n_{1}-1)p_{2}p_{3}+p_{3}p_{12}-p_{1}p_{23}+(p_{12}p_{13}-p_{1}p_{123})]\\
+&(1-q_{1}q_{3}) \cdot p_{2}p_{13} > 0 \\
\end{align*}
When all $n_i \geq 4$, it suffices if
\begin{align*}
 & (1-q_{1})(3p_{1}p_{2}+p_{1}p_{23}-p_{3}p_{12}) + (1-q_{3})(3p_{2}p_{3}+p_{3}p_{12}-p_{1}p_{23})>0, \quad
\end{align*}
We can further simplify it into
\begin{align*}
& p_{1}p_{2}(3+q_{3}-4q_{1})+p_{2}p_{3}(3+q_{1}-4q_{3})>0.
\end{align*}
Because of $q_i \leq0.75$, this inequality holds, and thus $\frac{e'_{22}}{e_{22}} >\frac{e'_{2 out}}{e_{2out}}$ holds.
\end{proof}

Though we assume a restriction on $n_i$ and $q$'s values in the proof, $q_i$ can be bigger if $n_i$ is guaranteed to be bigger. This fact indicates that even when the edge density in the background is approaching the internal edge density in the target layer, if the number of communities is guaranteed to be big, HICODE can still work. It makes sense because when there are many communities, each community is smaller,  and each node has less neighbors from the same community and more neighbors from other communities, and thus it is harder to have significantly higher internal edge density as compared to the background density.

\begin{theorem}\label{(3layer)ReduceWeight on other layer}
For a three-layer stochastic block model network $G(n,n_1,n_2,n_3,p_1,p_2,p_3)$, the modularity of a layer increases if we apply \textcolor{blue}{ReduceWeight} on all communities in other layers.
\end{theorem}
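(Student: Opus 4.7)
The plan is to leverage the fact that ReduceWeight is the derandomized counterpart of ReduceEdge: where ReduceEdge keeps each internal edge of community $i$ in layer $l$ independently with probability $q_l^i$, ReduceWeight deterministically multiplies its weight by the same factor. Consequently, the total weight of any edge set after ReduceWeight equals, in expectation over the SBM, the number of surviving edges after ReduceEdge. So I expect the proof to mirror that of Theorem~\ref{(3layer)ReduceEdge on other layer} almost verbatim, once modularity is interpreted in the weighted sense.

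First I would fix layer 2 as the target (the other two cases are symmetric) and extend Definition~\ref{Modularity of a community and a partition/layer} and Lemma~\ref{jialu2020} to weighted graphs by replacing every edge count with the corresponding total edge weight; this extension is immediate since the proof of Lemma~\ref{jialu2020} only manipulates $e_{ll}$ and $e_{l\,out}$ arithmetically. I would then track, edge-set by edge-set, how ReduceWeight on layers 1 and 3 rescales total weights. Since in our SBM every community of layer $l$ has the same $q_l$, an edge in $S_J$ has its weight multiplied by $\prod_{l \in J \cap \{1,3\}} q_l$. Concretely, $S_2$ is unchanged; $S_1$ and $S_{12}$ are scaled by $q_1$; $S_3$ and $S_{23}$ by $q_3$; and $S_{13}$ and $S_{123}$ by $q_1 q_3$. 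Substituting into Lemma~\ref{ell_and_elout} gives the new weighted $e'_{22}$ and $e'_{2\,out}$.

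The target inequality $e'_{2\,out}/e_{2\,out} < e'_{22}/e_{22}$ then reduces, after clearing denominators, to exactly the same expression derived in the proof of Theorem~\ref{(3layer)ReduceEdge on other layer}, namely a sum of non-negative terms weighted by $(1-q_1)$, $(1-q_3)$, and $(1-q_1 q_3)$. From that point on the argument is identical: apply Lemma~\ref{The maximum number of edges in different sets}, factor out $\tfrac{1}{4} n^2(n_1-1)(n_2-1)(n_3-1) r_{123}^2$, use $n_i \geq 4$ to drop the smaller positive contributions, and finally invoke $q_i \leq 0.75$ to establish positivity; concluding via the weighted Lemma~\ref{jialu2020} gives the modularity increase.

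The only genuinely new step, and the mild obstacle, is justifying the weighted-graph extension of the modularity framework and verifying that ReduceWeight rescales each $S_J$ deterministically by exactly the same factor that ReduceEdge rescales it in expectation. This is essentially bookkeeping rather than a new algebraic difficulty; once it is in place, the three-layer ReduceEdge calculation transfers without modification.
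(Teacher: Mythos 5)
Your proposal matches the paper's own proof essentially step for step: the paper likewise fixes layer 2, interprets $e_{22}$ and $e_{2out}$ as weighted sums over the edge sets, applies the same per-set scaling factors ($S_2$ unchanged; $S_1, S_{12}$ by $q_1$; $S_3, S_{23}$ by $q_3$; $S_{13}, S_{123}$ by $q_1 q_3$, with $w_{uv}$ taken as the edge generation probability), and reduces to the identical inequality already established in Theorem~\ref{(3layer)ReduceEdge on other layer}. Your explicit attention to extending Lemma~\ref{jialu2020} to weighted graphs is a point the paper glosses over, but it is the same argument.
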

\begin{proof}
We still consider one case of taking layer 2 as the target layer. 
In the weighted network, the weighted sum of internal edges and outgoing edges of a community $i$ in layer 2 is $e_{22} = \frac{1}{2} \sum_{u,v \in i} w_{uv} A_{uv}$ and $e_{2out} = \frac{1}{2} \sum_{u \in i, v \notin i} w_{uv} A_{uv}$ where $w_{uv}$ is the weight of edge $(u,v)$ and $A_{uv}$ is an indicative function. 
In HICODE, ReduceWeight is parallel to ReduceEdge. Therefore, the internal and outgoing edges before and after the reducing are:
\begin{align*}
e_{22} &=\frac{1}{2n_2}\sum_{\{{S_2},S_{12},S_{23},S_{123}\}}^{S_i}\sum_{(u,v) \in {S_i}} w_{uv} \cdot A_{uv},\\
e_{22}' &= \frac{1}{2n_2}(\sum_{(u,v) \in {S_2}} w_{uv} \cdot A_{uv} + \sum_{(u,v) \in {S_{12}}} w_{uv}\cdot q_{1} \cdot A_{uv} \\
&+ \sum_{(u,v) \in {S_{23}}} w_{uv}\cdot q_{3} \cdot A_{uv}+ \sum_{(u,v) \in {S_{123}}} w_{uv}\cdot q_{1}q_{3} \cdot A_{uv} ,\\
e_{2out} &=\frac{1}{2n_2}\sum_{\{{S_1},S_{3},S_{13}\}}^{S_i}\sum_{(u,v) \in {S_i}} w_{uv}\cdot A_{uv}),\\
e_{2out}' &= \frac{1}{2n_2}(\sum_{(u,v) \in {S_{1}}} w_{uv}\cdot q_{1} \cdot A_{uv} + \sum_{(u,v) \in {S_{3}}} w_{uv}\cdot q_{3} \cdot A_{uv} + \sum_{(u,v) \in {S_{13}}} w_{uv}\cdot q_{1}q_{3} \cdot A_{uv}  ).\\
\end{align*}
In our three-layer stochastic block model, $w_{uv}$ is the generation probability of edge $(u,v)$. Therefore, after some algebraic manipulation, we can get the same inequality as in Theorem \ref{(3layer)ReduceEdge on other layer}.
\end{proof}

The analysis shows that when HICODE applies the reducing method on other layers, the modularity of the current target layer always increases, implying that the ground truth communities in the target layer 
are getting stronger in the reduced graph 
and become easier to be detected.

\section{Theoretical Analysis on Multi-layer Stochastic Block Model}\label{section(Theoretical analysis on Multi-layer Stochastic Block Model)}
This section analyzes the HICODE's effect on the strength of layers in the general multi-layer stochastic block model. 
Similarly, we prove that fixing a target layer and applying the reduction method on all other layers would increase the target layer's modularity. The proofs of different reduction methods (RemoveEdge, ReduceEdge, and ReduceWeight) are analogous. Hence, we unify them into one proof.

\begin{theorem}\label{(The)reduce_in_mult_SBM}
For a multi-layer stochastic block model network $G(n,n_1...n_L,p_1...p_L)$.  Let $\mathcal{L}$ denote the set of all layers. For any target layer $l^* \in \mathcal{L}$, let $\mathcal{L'} = \mathcal{L} \setminus \{l^*\}$. If we apply any ``suitable'' reducing method on all communities in  $\mathcal{L'}$, the modularity of $l^*$ always increases.
\end{theorem}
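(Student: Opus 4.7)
The plan is to apply Lemma~\ref{jialu2020}, which reduces the theorem to showing that $e'_{l^* out}/e_{l^* out} < e'_{l^* l^*}/e_{l^* l^*}$ holds after the reduction. First I would generalize Lemmas~\ref{ell_and_elout} and~\ref{The maximum number of edges in different sets} to $L$ layers: for each non-empty $T \subseteq \mathcal{L}$ let $|\overline{S_T}|$ denote the expected number of edges internal to exactly the layers in $T$, so that the sets $\{S_T\}$ partition the edges and
\begin{equation*}
e_{l^* l^*} = \frac{1}{n_{l^*}}\sum_{T \ni l^*}|\overline{S_T}|, \qquad e_{l^* out} = \frac{2}{n_{l^*}}\sum_{T \not\ni l^*,\,T\neq\emptyset}|\overline{S_T}|.
\end{equation*}
An inclusion-exclusion on intersection blocks then gives each $|\overline{S_T}|$ an explicit closed form in $n$, the common intersection size $r = n/\prod_l n_l$, the factors $(n_l-1)$, and $p_T = 1-\prod_{l \in T}(1-p_l)$.

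Next I would describe the three reducing methods uniformly. Each, applied to a layer $l \in \mathcal{L}' = \mathcal{L} \setminus \{l^*\}$, acts on every $|\overline{S_T}|$ multiplicatively, so after reducing all of $\mathcal{L}'$ we have $|\overline{S_T}|' = \alpha_T|\overline{S_T}|$ with $\alpha_T = \prod_{l \in T \cap \mathcal{L}'} q_l$ --- where $q_l = 0$ recovers RemoveEdge and $q_l \in (0,1)$ from Eq.~\ref{defn_qi} recovers ReduceEdge and ReduceWeight. The decisive structural point is that $\alpha_{\{l^*\}} = 1$ (empty product), whereas every outgoing set has $\alpha_T < 1$; handling all three methods through this single multiplicative template is what delivers the unified proof promised by the statement, with the word ``suitable'' being read as ``multiplicative on each $|\overline{S_T}|$ with $\alpha_T = 1$ when $T \cap \mathcal{L}' = \emptyset$.''

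For RemoveEdge the claim is immediate: $\alpha_T = 0$ for every outgoing $T$, so $e'_{l^* out} = 0$, while $e'_{l^* l^*} \geq |\overline{S_{\{l^*\}}}|/n_{l^*} > 0$. For ReduceEdge and ReduceWeight I would pair each outgoing set $S_U$ (for $\emptyset \neq U \subseteq \mathcal{L}'$) with the internal set $S_{U \cup \{l^*\}}$; the two share the multiplier $\alpha_U$, and the set $S_{\{l^*\}}$ stands alone with multiplier $1$. Writing $a_U = |\overline{S_U}|$ and $b_U = |\overline{S_{U \cup \{l^*\}}}|$, cross-multiplying the target inequality and regrouping yields the equivalent condition
\begin{equation*}
b_\emptyset \sum_{U \neq \emptyset}(1-\alpha_U)\,a_U \;+\; \sum_{U,\,V \neq \emptyset}\alpha_U\,(b_U a_V - a_U b_V) \;>\; 0,
\end{equation*}
in which the first summand is strictly positive since $b_\emptyset > 0$ (because $p_{l^*} > 0$) and every $\alpha_U < 1$.

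The main obstacle is the second, Chebyshev-like antisymmetric sum, which is the direct analogue of the combinatorial inequality that the three-layer proofs dispatched using $n_l \geq 4$ together with $q_l \leq 3/4$. My plan is to exploit the closed form $b_U/a_U = (p_{l^*}/p_U + 1 - p_{l^*})/(n_{l^*}-1)$, valid for $U \subsetneq \mathcal{L}'$ with a minor boundary correction at $U = \mathcal{L}'$, and observe that along every chain $U \subsetneq V$ both $\alpha_U$ and $b_U/a_U$ are decreasing, so that every chain contribution $(\alpha_U - \alpha_V)(b_U a_V - a_U b_V)$ is non-negative by the rearrangement inequality. The remaining antichain contributions (pairs $U, V$ with $U \not\subseteq V$ and $V \not\subseteq U$) are then bounded in absolute value and absorbed into the strictly positive first summand using the standing assumptions $n_l \geq 4$ and $p_l \in [0.05,1]$, which make $b_\emptyset$ (carrying a factor $\prod_{l \in \mathcal{L}'}(n_l-1)$) dominant over each $a_V b_U$. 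Because all three reducing methods live under the single $\alpha_T$ parameter, verifying this one inequality settles the theorem for RemoveEdge, ReduceEdge, and ReduceWeight simultaneously.
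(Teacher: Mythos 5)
Your skeleton is the same as the paper's: invoke Lemma~\ref{jialu2020}, decompose the edges by the exact subset $T$ of layers they are internal to, cross-multiply the ratio inequality, and antisymmetrize into paired Chebyshev-type terms. Your algebra is sound as far as it goes — since $P(\{l^*\}\cup U) = P(\{l^*\}) + (1-P(\{l^*\}))\,P(U)$, your $b_U a_V - a_U b_V$ collapses to a positive constant times $p_{l^*}\,(P(V)-P(U))\,F(U)F(V)$, and your singled-out $b_\emptyset$ summand is exactly the collection of pairs $(\emptyset, U)$ in the paper's double sum (the paper keeps $T=\emptyset$, with $P(\emptyset)=0$ and $K(\emptyset)=1$, inside the sum rather than splitting it off). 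Your RemoveEdge case is correct and matches the paper's.

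There is, however, a genuine gap, rooted in your redefinition of ``suitable.'' The paper defines suitable as a monotone coupling: for \emph{all} $T,T' \in 2^{\mathcal{L}'}$ — comparable under inclusion or not — $P(T)\le P(T')$ implies $K(T)\ge K(T')$. Under that hypothesis every paired term $F(T^i)F(T^j)\bigl(K(T^i)-K(T^j)\bigr)\bigl(P(T^j)-P(T^i)\bigr)$ is nonnegative and the theorem follows immediately; there is no chain/antichain dichotomy to resolve. The paper then verifies suitability only for RemoveEdge and explicitly \emph{conjectures} it for ReduceEdge and ReduceWeight (under restrictions echoing the $n_i\ge 4$, $q_i\le 3/4$ bounds of \cref{(3layer)ReduceEdge on other layer,(3layer)ReduceWeight on other layer}). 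Your reading — ``suitable'' means $\alpha_T = \prod_{l\in T\cap\mathcal{L}'} q_l$ with the $q_l$ essentially unconstrained — turns the statement into a strictly stronger claim, and the one step that would carry it, bounding the discordant pairs (those with $P(U)<P(V)$ but $\alpha_U<\alpha_V$, which your multiplicative template does not exclude for incomparable $U,V$) and absorbing them into the $b_\emptyset$ summand, is asserted rather than proved. The asserted dominance is also not free: as the $q_l$ approach $1$, the positive factors $(1-\alpha_U)$ and the discordant differences $\alpha_U-\alpha_V$ vanish at the same linear rate, so the comparison hinges on explicit constants in $F$, $P$, and the $q_l$ that your sketch never produces — precisely the estimate the paper sidesteps by hypothesis. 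To close the argument, either adopt the paper's monotonicity definition of suitable, in which case your chain computation already covers all pairs and the proof is complete, or actually supply the quantitative absorption bound — which would amount to proving the paper's open conjecture about ReduceEdge/ReduceWeight, not merely its theorem.
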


We will define ``suitable'' shortly, but first, let us define several functions to simplify the notations.

\begin{definition}
Define a function $ C : 2^{\mathcal{L'}} \rightarrow 2^{\mathcal{L'}}$ to return the complement of a set with respect to $\mathcal{L'}$, that is,  $C(T) = \mathcal{L'} \setminus T $.
\end{definition}
\begin{definition}
Define a function $ F: 2^{\mathcal{L'}} \rightarrow \mathbb{Q}$ to return the fractions of node pairs that are internal to 
all layers in $T$ and not internal to any layer in $C(T)$.
\end{definition}

Since the probability that a node pair is internal to layer $l$ is $\frac{1}{n_l}$ (here we assume that nodes can have self-loops for convenience), and layers are independent,
\[F(T) = \prod_{l \in T} \frac{1}{n_l} \cdot \prod_{l \in C(T)} (1-\frac{1}{n_l}).\]


\begin{definition}
Define a function $P: 2^{\mathcal{L}} \rightarrow [0,1]$ such that $P(T)$ is the probability that one of the layers in $T$ generates an edge on a node pair internal to layers $T$. 
\end{definition}
Because all layers are independent, we can define $P$ recursively: 
for $l \in \mathcal{L}$ such that $l \notin T$, 
\begin{align*}
P(\{l\} \cup T) = P(\{l\}) + ( 1 - P(\{l\})) \cdot P(T)
\end{align*}

\begin{definition}
Define a function $K: 2^{\mathcal{L}'} \rightarrow [0,1]$ such that $K(T)$ is the probability that the reduction method keeps edges in $T$. 
\end{definition}

We consider a reducing method to be ``suitable'' if for any $T, T' \in 2^{\mathcal{L}'}$, $P(T) \leq P(T')$ implies $K(T) \geq K(T')$. 
When HICODE applies RemoveEdge, then $K(T) = 0$ for any set of layers $T$, so RemoveEdge is suitable. 
When HICODE applies ReduceEdge or ReduceWeight, we reduce edges internal to $T$ so that the edge density in $T$ is the same as the background edge density. 
Also, a larger $P(T)$ indicates that an edge internal to $T$ is more likely to be generated, so intuitively, 
so more percentage of edges in the set will be removed to match the background. 
Formally, We conjecture that ReduceEdge and ReduceWeight are suitable under some restrictions on the number of communities and background edge density, like the case in~\cref{(3layer)ReduceEdge on other layer} and~\cref{(3layer)ReduceWeight on other layer}, 

\begin{proof}[Theorem\ref{(The)reduce_in_mult_SBM}]
	For any node pair in any $T \in 2^{\mathcal{L'}}$, $F(T) \cdot P(T)$ represents the probability that this node pair generates an edge which is internal to layers in $T$ and outgoing to layers in $C(T)$. Fix $l^*$ as the target layer. There are a total of $\frac{1}{2}\frac{n}{n_{l^*}}(\frac{n}{n_{l^*}}-1)$ and $\frac{n}{n_{l^*}} \cdot (n-\frac{n}{n_{l^*}})$ node pairs that can generate internal and outgoing edges for layer $l^*$. Also, since layers are independent, the number of node pairs in $\{l^*\} \cup  T$ (and no other layers) is the number of node pairs in $l^*$ times $F(T)$.
	So we have:
\begin{align*}
e_{l^{*}l^{*}} &=  \frac{1}{2}\frac{n}{n_{l^*}}(\frac{n}{n_{l^*}}-1) \cdot \sum_{T \in 2^{\mathcal{L'}}}  F(T) \cdot P\big(\{l^*\} \cup T\big) , \\
e_{l^{*}out} &= \frac{n}{n_{l^*}}(n-\frac{n}{n_{l^*}}) \cdot \sum_{T \in 2^{\mathcal{L'}}} F(T) \cdot P(T).
\end{align*}

After the reducing method is performed on all communities in  $\mathcal{L'}$,
\begin{align*}
e_{l^{*}l^{*}}' &=\frac{1}{2}\frac{n}{n_{l^*}}(\frac{n}{n_{l^*}}-1) \cdot \sum_{T \in 2^{\mathcal{L'}}}  F(T) \cdot P\big(\{l^*\} \cup T\big) \cdot K(T), \\
e_{l^{*}out}' &= \frac{n}{n_{l^*}} (n-\frac{n}{n_{l^*}}) \cdot \sum_{T \in 2^{\mathcal{L'}}} F(T) \cdot P(T) \cdot K(T).
\end{align*}
Thus,
\begin{align*}
&\frac{e'_{l^*out}}{e_{l^*out}} < \frac{e'_{l^*l^*}}{e_{l^*l^*}} \\
\iff & \frac{\sum_{T \in 2^{\mathcal{L'}}} F(T) \cdot P(T) \cdot K(T)}{\sum_{T \in 2^{\mathcal{L'}}} F(T) \cdot P(T)} <  \frac{\sum_{T \in 2^{\mathcal{L'}}} F(T) \cdot P\big(\{l^*\} \cup T\big) \cdot K(T) }{\sum_{T \in 2^{\mathcal{L'}}}  F(T) \cdot P\big(\{l^*\} \cup T\big) }
\end{align*}
Since $P(\{l^*\} \cup T) = P(\{l^*\}) + ( 1 - P(\{l^*\})) \cdot P(T) $, so the inequality is equivalent to
\begin{align*}
	& [\sum_{T \in 2^{\mathcal{L'}}} F(T)P(T)K(T) \sum_{T \in 2^{\mathcal{L'}}} F(T)P(T))] \cdot (1-P(l^{*}))\\
+ & \sum_{T \in 2^{\mathcal{L'}}} F(T)P(T)K(T)  \sum_{T \in 2^{\mathcal{L'}}}F(T)P(l^{*}) \\
> & [\sum_{T \in 2^{\mathcal{L'}}} F(T)P(T) \sum_{T \in 2^{\mathcal{L'}}} F(T)P(T) K(T)] \cdot (1-P(l^{*})) \\
+ &\sum_{T \in 2^{\mathcal{L'}}}F(T)P(T)\sum_{T \in 2^{\mathcal{L'}}}F(T)P(l^{*}) K(T),
\end{align*}
which is equivalent to
\begin{align*}
\iff & \sum_{T \in 2^{\mathcal{L'}}}F(T)P(T)\sum_{T \in 2^{\mathcal{L'}}} F(T)K(T) - \sum_{T \in 2^{\mathcal{L'}}}F(T)P(T)K(T)\sum_{T \in 2^{\mathcal{L'}}}F(T) > 0.
\end{align*}
Represent the elements in the power set of $\mathcal{L'}$ as $T^0, T^1... T^{2^{|L'|}}$, and sort the elements such that for any $i,j \in [0,2^{|L'|}]$, if $i<j$ , $P(T^i) \leq P(T^j)$ and $K(T^i) \geq K(T^j)$.
Therefore, the above inequality is equivalent to:
\begin{align}\label{multi:inequality}
\sum_{i=0}^{2^{|L'|}}\sum_{j=0}^{2^{|L'|}}F(T^{i})F(T^{j})K(T^{j})(P(T^{i})-P(T^{j})) > 0.
\end{align}
For any $i,j \in [0, 2^{|L'|}]$ and $i < j$, we have:
\begin{align*}
F(T^{i})F(T^{j})K(T^{j})(P(T^{i})-P(T^{j})) +  F(T^{j})F(T^{i})K(T^{i})(P(T^{j})-P(T^{i})) > 0.
\end{align*}
Thus, ~\cref{multi:inequality} holds and we have $\frac{e'_{l^*out}}{e_{l^*out}} < \frac{e'_{l^*l^*}}{e_{l^*l^*}}$.
\end{proof}

\section{Simulation Results of HICODE}\label{(section)Simulation of HICODE}
In this section, we simulate the process of how the community structure could be strengthened through the iterative reduction process of HICODE. 
We adapt normalized mutual information (NMI)~\citep{danon2005comparing} to measure the similarity between two partitions. Intuitively, two partitions are more similar if they have a larger NMI value. 

\begin{definition}[NMI similarity]\label{(definition)NMI similarity}
The normalized mutual information (NMI) of two partitions $X,Y$ is defined to be
\begin{align*}
NMI(X,Y) = \frac{2 I(X,Y)}{ H(X) + H(Y)},
\end{align*}
where $H(X)$ is the entropy of partition with $p(x)$ taken to be $|X|$.
\begin{align*}
H(X) = - \sum_{x \in X} p(x) \log p(x)
= - \sum_{x \in X} |x| \log |x|,
\end{align*}
and $I(X,Y)$ measures the mutual information between $X$ and $Y$ by
\begin{align*}
I(X,Y) &= \sum_{x \in X} \sum_{y \in Y} p(x,y) \log \frac{p(x,y)}{p(x) \cdot p(y)}\\
&= \sum_{x \in X} \sum_{y \in Y} |x \cap y| \log \frac{ |x \cap y| }{|x| \cdot |y|}.
\end{align*}
\end{definition}

To illustrate the effectiveness of HICODE on multi-layer networks more vividly, we run HICODE on a network synthesized from 3-layer stochastic block model.
Specifically, we generate the network based on $G(400,4,5,10,0.1,0.11,0.12)$ and adopt ReduceEdge as the reducing method and Louvain as the base algorithm.
We visualize a set of possible partitions' modularity at different timestamps of the algorithm. We also highlight the partition found by the base algorithm at each step to show how it gets close to the ground truth layers. 

Ideally, we want to enumerate all possible partitions of $n$ nodes and show their modularity changes, but it is computationally unrealistic because the number of possible partitions is exponential. Thus, we sample possible partitions in two ways. The first method is by mutating the ground truth layers. Specifically, we start from a ground truth layer,
and exchange $k$ random pairs of nodes for $k = 1, ..., 500$. We collect two samples for each $k$ and 1000 samples in total through this method. The second method is through blending all the layers: it generates a ``blended'' partition by letting each node's community ID be its community ID in one of the ground truth layers. 
In these ways, we sample at least 4500 partitions. 

In~\cref{(figure)3layersimulation}, we present the partitions' modularity at each timestamp in a similar manner to \cref{(figure)modularity_of_two_model}. 
In each sub-figure of \cref{(figure)3layersimulation}, a partition's $x$ and $y$ coordinates indicate its similarity to the two ground truth layers, where the similarity is measured by the NMI scores, 
and the $z$ coordinate indicates its modularity. 
The blue grid shows the sampled partitions' modularity and the red vertical line is the partition found by the base algorithm. 
We choose two layers instead of using all three ground truth layers when determining the coordinates because otherwise, the data would be 4-dimensional and hard to visualize; 
to compensate, we use three sub-figures to display three perspectives resulted from different choices of layers: layer 1 and 2 for the first column,
layer 1 and 3  for the second column, and layer 2 and 3 for the third column.

\begin{figure}[htb]
\center{\includegraphics[width=16cm] {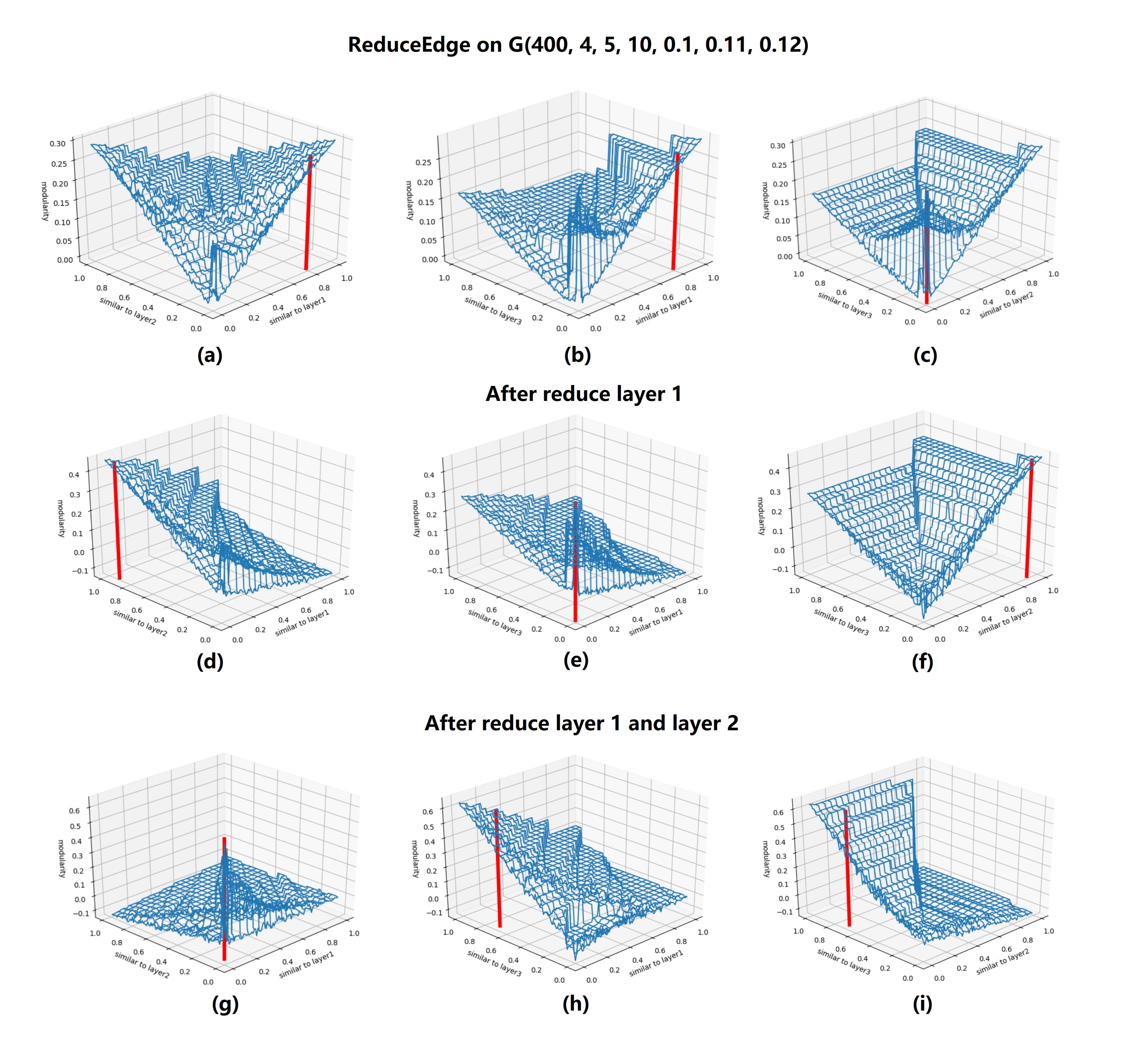}}
\caption{\label{1} The identification stage of HICODE on a three-layer stochastic model.}
\label{(figure)3layersimulation}
\end{figure}

Specifically, 
\begin{itemize}
\item On the first row, the sub-figures (a), (b), and (c) represent the situation on the original graph $G_0$. The modularity scores of the three ground truth layers are respectively 0.300, 0.292, and 0.166. The base algorithm finds a partition whose NMI with layer 1 is 0.79, so it almost uncovers layer 1 and overlooks layers 2 and 3. 

\item On the second row, the sub-figures (d), (e), and (f) represent the situation on graph $G_1$ after we reduce the detected layer 1 on $G_0$.
We observe that the modularity of layer 1 drops significantly, and the modularity scores of layers 2 and 3 increase to 0.458 and 0.279 respectively. 
As the result, the base algorithm detects a partition whose $NMI$ with layer 2 is $0.90$. 

\item On the third row, the sub-figures (g), (h), and (i) represent the situation on graph $G_2$, which is obtained from reducing the detected layer 2 on $G_1$. Here, both the modularity scores of layer 1 and layer 2 have decayed significantly. In contrast, the modularity of layer 3 increases to be as high as 0.645, making it much easier for the base algorithm to uncover. 
\end{itemize}


\begin{figure}[htb]
\center{\includegraphics[width=13cm]  {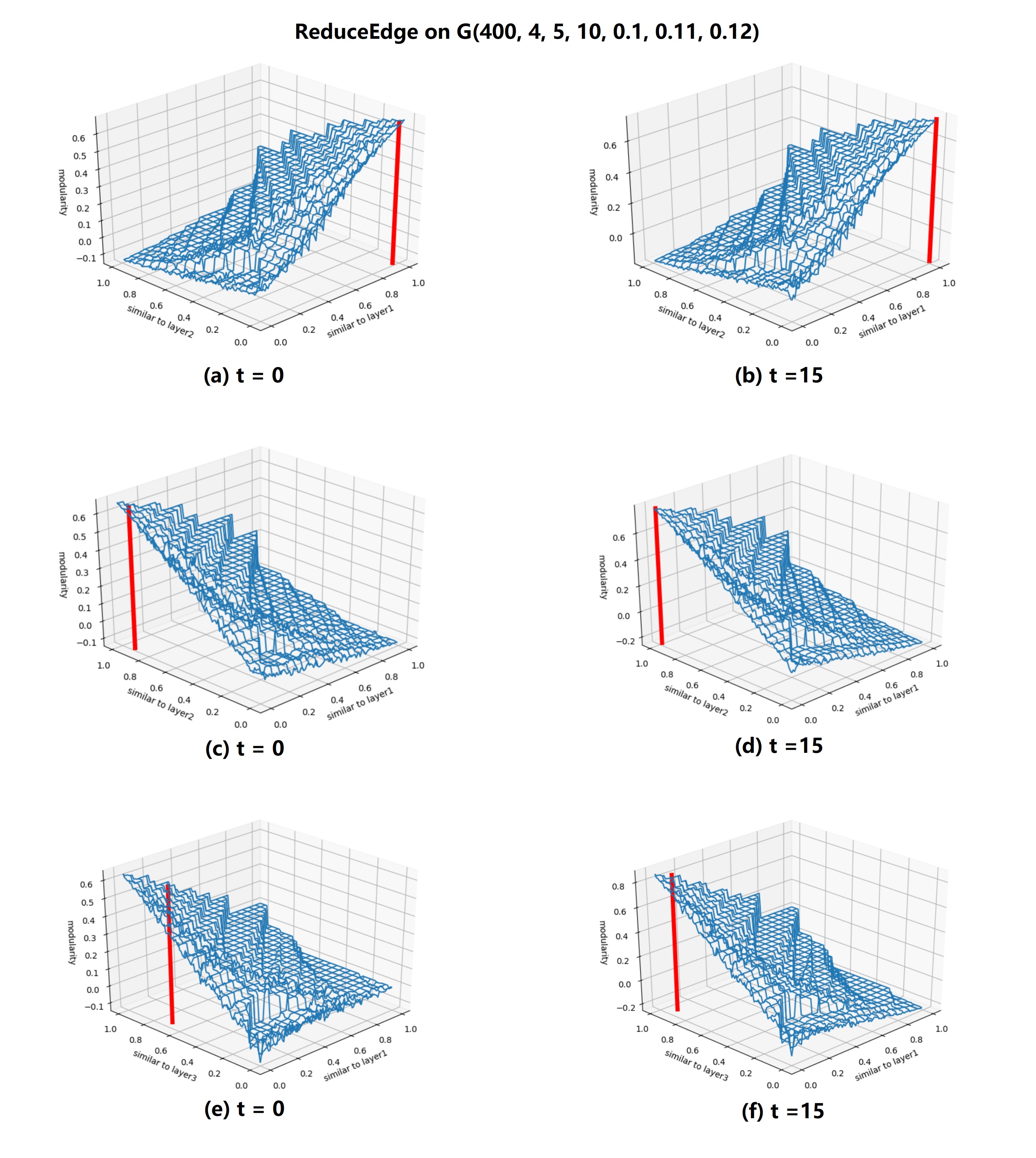}}
\caption{\label{2} The refinement stage of HICODE on the three-layer stochastic model.}
\end{figure}

Fig. \ref{2} shows the effectiveness of the refinement stage. The red vertical lines in sub-figures (a), (c), and (e) represent the partitions detected in the identification stage to approximate different layers, and the red vertical lines in sub-figures (b),(d), and (f) represent the partitions detected after 15 iterations of the refinement stage. The red lines in (b), (d), and (f) are closer to the corners of the bottom square,  indicating that each detected partition approximates the ground truth layer it tries to uncover better after the refinement stage. In the end, the NMI of the ground truth layers and respective detected partitions are 1.00, 0.98, and 0.91, demonstrating the boosting effects of the refinement stage. 

\section{Conclusion}\label{section(Conclusion)}

Real-world networks may contain multiple layers of community structure, where each layer consists of mostly disjoint communities. In that case, the stronger layers could dominate the structure and hide the weaker layers. Typical community detection algorithms overlook the hidden communities overshadowed by the dominant communities. Moreover, the existence of hidden communities also interferes with the detection of the dominant communities.

We model real-world networks through multi-layer stochastic block models and analyze the effects of hidden communities on community detection tasks in a systematic manner. We first distinguish random noise added to the dominant communities from the equivalent structured noise coming from the hidden communities. We demonstrate that the structured noise will generate another modularity peak for node partitions and hinders the detection of the dominant communities more,  comparing to the commensurate random noise. 

Then we provide theoretical support for the HICODE framework, which uses iterative reduction to separate the layers of community structures and to boost the detection quality. 
It had been proved that reducing more percentage of a layer's ``noise edges'' than its ``community edges'' would strengthen the communities in the layer. Based on that intuitive fact, we prove that HICODE's layer reduction process would always increase the target layers' modularity in multi-layer stochastic block models and improve the base algorithm's ability to discover the target layers. In the end, we also provide simulation on a three-layer stochastic block model, demonstrating visually that HICODE can preeminently improve the detection accuracy for all layers of community blocks.

\acks{This work is supported by National Natural Science Foundation (62076105).}


\newpage

\appendix
\section*{Appendix A.}
\label{app:theorem}


\begin{proof}[The other two cases in Theorem \ref{(the)two_p}]
For the second case, we swap node $v$ in community 1 and node $u$ in community 2. For these two nodes in the two communities, the internal edges of them become the outgoing edges of both communities, and the outgoing edges of them become the internal edges of each other’s community. 
Therefore, in the new partition, for both community, $e'_{22}=e_{22}-\frac{e_{22}}{s_{2}}+\frac{e_{2out}}{(n_{2}-1)s_{2}}$ and $e'_{2out}=e_{2out}+\frac{e_{22}}{s_{2}}-\frac{e_{2out}}{(n_{2}-1)s_{2}}$ .The modularity scores of the new partitions are as follows:

\begin{align*} 
Q^{\textsf{S}}_{l'_2}=&(n_{2}-2)\left[ \frac{e_{22}}{e} - \left(\frac{d_{2}}{2e} \right)^2  \right]+2\left[ \frac{e_{22}-\frac{e_{22}}{s_{2}}+\frac{e_{2out}}{(n_{2}-1)s_{2}}}{e} - \left(\frac{d_{2}- \frac{e_{22}}{s_{2}}+\frac{e_{2out}}{(n_{2}-1)s_{2}}}{2e} \right)^2  \right]\\
<& Q^{\textsf{S}}_{l_2} + 2\left[\frac{e_{22}}{e}- \frac{(n_{2}-1)e_{22}-e_{2out}}{e\cdot s_2(n_{2}-1)} - \left( \left(\frac{d_{2}}{2e} \right)^2 - \frac{2d_{1}\left( \frac{e_{22}}{s_{2}}-\frac{e_{2out}}{(n_{2}-1)s_{2}} \right)}{4e^2} \right) \right] \\
&- 2\left[ \frac{e_{22}}{e} - \left(\frac{d_{2}}{2e} \right)^2  \right] \\
=& Q^{\textsf{S}}_{l_2} - 2\left[  \frac{(n_{2}-1)e_{22}-e_{2out}}{e\cdot s_2(n_{2}-1)} - \frac{(n_{2}-1)e_{22}-e_{2out}}{e\cdot s_1n_{2}(n_{2}-1)} \right]\\
=& Q^{\textsf{S}}_{l_2} - \frac{2}{ne}[(n_{2}-1)e_{22}-e_{2out}].
\end{align*}

For the third case, we separate node $v$ from community 1 to form the new community $i$, the internal edges connected to node $v$ become the outgoing edges of both community 1 and community $i$ while the outgoing edges connected to node $v$ become the outgoing edges of community $i$. Therefore, in the new partition, for community 1, $e'_{22} = e_{22}-\frac{e_{22}}{s_{2}}$ and $e'_{2out}=e_{2out}+ \frac{e_{22}}{s_{2}} - \frac{e_{2out}}{s_{2}}$. And for the new community $i$, $e'_{2out}=e_{2out}+ \frac{e_{22}}{s_{2}} + \frac{e_{2out}}{s_{2}}$ while there are no internal edges. The modularity score of the new partition is:

\begin{align*} 
Q^{\textsf{S}}_{l'_2} 
=&(n_{2}-1)\left[ \frac{e_{22}}{e} - \left(\frac{d_{2}}{2e} \right)^2  \right] + \left[ \frac{e_{22}-\frac{e_{22}}{s_{2}}}{e} - \left(\frac{d_{2}-\frac{e_{22}}{s_{2}}-\frac{e_{2out}}{s_{2}}}{2e} \right)^2  \right] \\
&+\left[ \frac{0}{e}- \left(\frac{\frac{e_{22}}{s_{2}}+\frac{e_{2out}}{s_{2}}}{2e} \right)^2 \right] \\
=& Q^{\textsf{S}}_{l_2} - \left[ \frac{e_{22}}{e \cdot s_{2}} - \frac{2d_{2}\left(\frac{e_{22}}{s_{2}}+\frac{e_{2out}}{s_{2}}\right)}{4e^2} + \frac{2\left(\frac{e_{22}}{s_{2}}+\frac{e_{2out}}{s_{2}}\right)^2}{4e^2} \right] \\
< & Q^{\textsf{S}}_{l_2} - \left[\frac{e_{22}}{e \cdot s_{2}} - \frac{d_{2}(e_{22}+e_{2out})}{2e^2 \cdot s_{2}} \right] \\
= & Q^{\textsf{S}}_{l_2} - \frac{2}{ne}[(n_{2}-1)e_{22}-e_{2out}].
\end{align*}
Because $ 3e_{22}>e_{2out} $ and $ n_{i}>4 $, $(n_{2}-1)e_{22}-e_{2out}>0$ and thus $Q^{\textsf{S}}_{l'_2}< Q^{\textsf{S}}_{l'_2}$ holds.
\end{proof}

\newpage

\vskip 0.2in
\bibliography{main}

\end{document}